\documentclass[conference]{IEEEtran}
\IEEEoverridecommandlockouts
\usepackage{cite}
\usepackage{amsmath,amssymb,amsfonts,amsthm}
\usepackage{mathtools}
\usepackage{algorithmic}
\usepackage[ruled,vlined]{algorithm2e}
\usepackage{graphicx}
\graphicspath{{./Image/}}
\usepackage{textcomp}
\usepackage{xcolor}
\newtheorem{theorem}{Theorem}

\newtheorem{definition}{Definition}
\newtheorem{proposition}{Proposition}
\newtheorem{lemma}[theorem]{Lemma}

\usepackage{booktabs}
\usepackage{siunitx}
\usepackage{soul}

\def\BibTeX{{\rm B\kern-.05em{\sc i\kern-.025em b}\kern-.08em
    T\kern-.1667em\lower.7ex\hbox{E}\kern-.125emX}}
\begin{document}

\title{Fast-Forward Relaying Scheme to Mitigate Jamming Attacks by Full-Duplex Radios}
\author{Vivek Chaudhary and J. Harshan\\
Department of Electrical Engineering,\\
Indian Institute of Technology Delhi, India.
}

\maketitle


\begin{abstract}
In this work, we address reliable communication of low-latency packets in the presence of a full-duplex adversary that is capable of executing a jamming attack while also being able to measure the power levels on various frequency bands. Due to the presence of a strong adversary, first, we point out that traditional frequency-hopping does not help since unused frequency bands may not be available, and moreover, the victim's transition between the frequency bands would be detected by the full-duplex adversary. Identifying these challenges, we propose a new cooperative mitigation strategy, referred to as the Semi-Coherent Fast-Forward Full-Duplex (SC-FFFD) relaying technique, wherein the victim node, upon switching to a new frequency band, seeks the assistance of its incumbent user, which is also a full-duplex radio, to instantaneously forward its messages to the destination using a portion of their powers. Meanwhile, the two nodes cooperatively use their residual powers on the jammed frequency band so as to engage the adversary to continue executing the jamming attack on the same band. Using on-off keying (OOK) and phase-shift-keying (PSK) as the modulation schemes at the victim and the helper node, respectively, we derive upper bounds on the probability of error of jointly decoding the information symbols of the two nodes, and subsequently derive analytical solutions to arrive at the power-splitting factor between the two frequency bands to minimize the error of both the nodes. We also present extensive simulation results for various signal-to-noise-ratio values and PSK constellations to showcase the efficacy of the proposed approach. 
\end{abstract}


\section{Introduction}
\label{sec:introduction}

Wireless applications with low-latency constraints have received traction in the recent past owing to the emergence of vehicular networks, involving autonomous vehicles, Unmanned Aerial Vehicles etc \cite{CCTCY}. While it is imperative to revisit the design of physical-layer algorithms to facilitate low-latency constraints, it is equally important to develop new countermeasures to mitigate Denial-of-Service (DOS) attacks \cite{2evolution_iot} such as jamming, since violation of deadline constraints could lead to catastrophic consequences. Although jamming attack is a well known threat model, and a number of countermeasures have been well studied against it, e.g., Frequency Hopping (FH) \cite{8freq_hopp1}, such traditional mitigation techniques may not be applicable in next-generation networks owing to lack of unused frequency bands due to exponential growth in the number of wireless devices. On the one hand, lack of unused frequency bands certainly poses interesting questions on how to provide ubiquitous and seamless communication of low-latency packets of the victim's node in a frequency band that is already occupied by another node in the network. On the other hand, the very idea of asking the victim node to switch to a new frequency band is questionable especially if the adversary is equipped with sophisticated hardware to execute the jamming attack. For instance, suppose that the adversary, which is equipped with an ideal Full-Duplex (FD) radio, is capable of executing the jamming attack on a frequency band, and is also able to simultaneously measure the power levels on various frequency bands including the one that is jammed. In such a case, the adversary can measure a significant drop in the power levels on the jammed frequency as soon as the victim switches to another frequency band. Therefore, such a reaction may compel the adversary to execute the jamming attack on another frequency band, thereby guaranteeing DOS attack on at least one of the nodes in the network.  

Besides the above observation on the FD adversary, it is clear that due to lack of unused frequency bands the victim node must necessarily share a new frequency band with another node so that the low-latency packets are reliably communicated to the destination within the deadline. Furthermore, the co-existence of the two nodes in the new frequency band must be such that the incumbent user must continue to transmit its information symbols to the destination, and moreover the victim node must also communicate its low-latency packets to reach the destination within the deadline. As a potential solution to achieve the above objective, we propose the use of a FD radio at the incumbent node, which can listen to the messages of the victim node, decode it, and instantaneously forward it to the destination along with its messages. Although one of the challenges of building a FD radio is perfect self-interference-cancellation (SIC), recent technological advancements \cite{13FD2}, \cite{16FD4} have shown promising results towards SIC within desirable limits. Furthermore, apart from the FD features, the prospects of building a fast-forward FD radios have also been explored wherein FD radios can instantaneously process the received symbols and then forward it in the same band.  For instance, in \cite{FFFDR}, the authors were able to achieve near perfect SIC in order of $\mu$s for WiFi signals. Other than the system-related work \cite{FFFDR} on fast-forward relays, several theoretical contributions on fast-forward relays have also been reported in the recent past. For more details, we refer the readers to \cite{HaY} and the references within. 




\subsection{Contributions}

We address a new framework to reliably communicate low-latency messages in the presence of a strong adversarial model wherein the attacker, which is equipped with a FD radio, has the capability to execute a jamming attack on a frequency band while also being able to measure the power levels on a wide range of frequency bands. 
To mitigate the above threat, we present the Semi-Coherent Fast-Forward Full-Duplex (SC-FFFD) relaying technique, wherein the victim node uses $(1- \alpha)$ fraction of its power, for some $0< \alpha < 1$, on a new frequency band to communicate its messages to the destination, while continuing to transmit its residual power on the jammed frequency band. Meanwhile, a full-duplex helper node, which is the incumbent user of the new frequency band, listens to the victim's message, decodes it, and instantaneously forwards it to the destination along with its messages using $\alpha$ fraction of its power. Furthermore, the helper node also pours its residual $1-\alpha$ fraction of power on the jammed frequency band thereby ensuring that the two nodes cooperatively maintain the same power levels on both the frequency bands. With such a strategy, the helper node assists the victim's message to reach the destination without violating the latency constraints.
Using On-Off Keying (OOK) and Phase-Shift-Keying (PSK) as the modulation schemes at the victim and the helper node, respectively, we present a thorough analysis on the error performance of the SC-FFFD technique when the destination employs a joint decoder on the new frequency band. We derive upper bounds on the average probability of error of the joint decoder at high signal-to-noise-ratio (SNR) values, and subsequently identify dominant error terms as a function of $\alpha$, henceforth referred to as the power-splitting factor. Finally, we prove non-trivial relations between the dominant terms to determine an appropriate value of $\alpha$ that minimizes the average probability of error of the joint decoder. Through extensive simulations, we show that the average probability of error of the SC-FFFD technique decreases with increasing SNR, which in turn implies that the victim node can reliably communicate its messages to the destination.

Although \cite{20DF, FFFDR, improved_sec, Cog_FH_FD} have studied jamming aspects with relaying techniques and FD radios, they have not addressed the idea of fast-forward relaying to engage an FD jammer on one frequency band. Among these prior works, the contributions of \cite{Cog_FH_FD} is closest to our work. However, unlike our work, \cite{Cog_FH_FD} assumes sufficient number of unused frequency bands to execute FH as countermeasure, and moreover, their mitigation technique does not engage the jammer on one frequency band. Throughout this paper, we refer to the victim, the helper, the attacker and the destination as Alice, Charlie, Dave and Bob, respectively.


\section{System Model and Problem Statement}
\label{sec:system_model}

Consider a network model, as shown in Fig. \ref{FFFD_SM}, consisting of two nodes, namely Alice and Charlie, that communicate with a base station, namely Bob, using orthogonal frequencies, represented by $f_{AB}$ and $f_{CB}$, respectively. We assume that unused frequency bands are not available as the network is operating at capacity with maximum number of users. The network requirements of Alice and Charlie are heterogeneous in the sense that Alice is interested in communicating low-rate messages that have low-latency constraints, whereas Charlie is interested in communicating high-rate messages that need not satisfy any low-latency constraints. The network also includes an active adversary, namely, Dave, that injects high-powered noise signals on $f_{AB}$ to execute a DOS attack on the low-latency messages of Alice. A key feature of the attack model is that Dave is equipped with a FD radio with perfect SIC capability such that it can scan a wide range of spectrum to measure the average power levels including $f_{AB}$ and $f_{CB}$. With the existence of such a vigilant jammer, Alice must somehow mitigate this jamming attack so as to continue transmitting her low-latency messages to Bob. Although a straightforward mitigation strategy for Alice is to hop to another frequency band, such a strategy would assist Dave to identify a significant drop in the power levels on $f_{AB}$. This \emph{frequency hole} on $f_{AB}$ would further compel Dave to attack on one of the remaining frequency bands resulting in degradation of error performance of at least one of the nodes in the network. Therefore, while it is necessary for Alice to hop to another frequency band, Dave must neither observe a dip in the power levels on $f_{AB}$, nor observe a surge in the power levels of another frequency band. Furthermore, Alice must not communicate any pilots on the new frequency band since the communication-overhead in pilot transmission does not help the low-latency constraints on the packets.

In the next section, we present a new cooperative relaying strategy wherein Alice seeks the help of Charlie (in the vicinity) to communicate her low-latency messages to Bob. 

\section{Semi-Coherent Fast-Forward Full-Duplex Relaying Technique}
\label{sec:FF}

\begin{figure}[t]
\centering
\includegraphics[scale = 0.22]{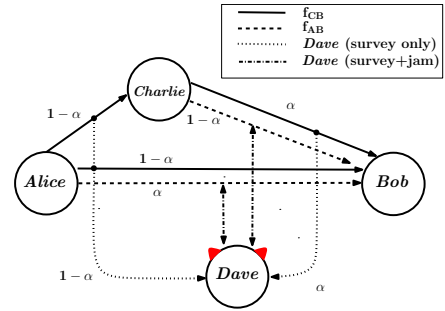}
\vspace{-0.2cm}
\caption{\label{FFFD_SM}Depiction of the system model comprising Alice (the victim node), Charlie (the helper node), Bob (the destination), and Dave (the jammer with FD radio). Upon experiencing a jamming attack on $f_{AB}$, Alice seeks the assistance of Charlie, which has a FD radio, to instantaneously forward her messages to Bob on frequency band $f_{CB}$.}
\end{figure}

As a countermeasure to mitigate the jamming attack, Bob directs Alice to switch to the frequency band $f_{CB}$, which is already used by Charlie. Furthermore, Bob guides Charlie to continue operating on $f_{CB}$, while also requesting him to relay Alice's information symbols in the FD mode. Since Charlie is capable of fast-forward relaying, he listens to the transmission of Alice on $f_{CB}$, decodes her information symbol, and then instantaneously forwards the decoded symbol to Bob by appropriately embedding it with its message using physical-layer techniques. As a consequence, Bob witnesses a multiple access channel on the frequency band $f_{CB}$ by receiving a linear combination of symbols from both Alice and Charlie. While this idea of fast-forward relaying serves Alice's messages to reach Bob with no additional delay, it is important to note that Dave will now observe zero transmission power in the frequency band $f_{AB}$. To circumvent this problem, we propose a power-splitting strategy between the two nodes, wherein Alice and Charlie employ $1- \alpha$ and $\alpha$ fractions of their power on the frequency band $f_{CB}$, respectively, for some $\alpha \in (0, 1)$. Meanwhile, they use their residual powers of $\alpha$ and $1-\alpha$ fractions on $f_{AB}$. As a result of this strategy, the attacker Dave neither observes a surge in the power level on $f_{CB}$ nor a dip in the power level on $f_{AB}$, thus deceiving Dave to believe that Alice has continued to transmit on $f_{AB}$. Henceforth, throughout this work, we refer to this strategy as the Fast-Forward Full-Duplex (FFFD) relaying scheme.

Under the framework of FFFD relaying scheme, we are interested in a semi-coherent (SC) modulation scheme, wherein Alice employs a non-coherent modulation technique, e.g. OOK, and Charlie employs a conventional coherent modulation technique, e.g., PSK, QAM. We highlight that the use of non-coherent modulation technique at Alice is to reduce the communication-overhead of transmitting pilot symbols upon switching to the frequency band $f_{CB}$, thereby facilitating the transmission of low-latency messages to Bob. In the next section, we present a detailed explanation on the signaling scheme of the SC-FFFD protocol. 
 
\subsection{Signal Model of SC-FFFD Relaying Protocol} 
 
In the proposed SC-FFFD relaying protocol, Alice employs OOK, denoted by the constellation $\mathcal{S}_{A} = \{0, 1\}$, whereas Charlie uses the traditional $M$-PSK constellation, denoted by the constellation $\mathcal{S}_{C} = \{e^{\frac{\iota 2\pi (j + 0.5)}{M}} ~|~ j = 0, 1, \ldots, M-1\}$, where $\iota = \sqrt{-1}$, and $M = 2^{m}$, for some positive integer $m$. As highlighted earlier, Charlie uses a full-duplex radio with perfect SIC capability. Upon transmission of information symbol from Alice, Charlie decodes Alice's symbols, and depending on the decoded bit, he instantaneously  transmits a modified version of the $M$-PSK symbol so that Bob can jointly decode the information symbols of both Alice and Charlie. In particular, if $x \in \mathcal{S}_{A}$ is transmitted from Alice on $f_{CB}$, Charlie receives
\begin{equation*}
r_{C} = \sqrt{1-\alpha} h_{AC}x + n_{C},
\end{equation*}
where $1 - \alpha$ is the associated power when transmitting symbol $1$, the complex number $h_{AC}\sim {\cal CN}(0,\sigma_{AC}^2)$ is the baseband channel between Alice and Charlie, and $n_C \sim {\cal CN}(0,N_o)$ is the additive white Gaussian noise (AWGN) at Charlie. Due to proximity between Alice and Charlie, we assume $\sigma_{AC}^2 \geq 1$. Owing to no knowledge of the instantaneous channel realization $h_{AC}$, Charlie performs non-coherent energy detection to obtain an estimate of $x$, denoted by $\hat{x}_{C}$. Furthermore, in order to transmit its own information symbol $y \in  \mathcal{S}_{C}$, Charlie transmits 
\begin{equation*}
\left\{ \begin{array}{llllll}
y, & \mbox{ if } \hat{x}_{C} = 0;\\
\sqrt{\alpha} e^{\frac{\iota \pi}{M}} y, & \mbox{ if } \hat{x}_{C} = 1.\\
\end{array}
\right.
\end{equation*}
With that, Charlie transmits a symbol from either $\mathcal{S}_{C}$ or $\sqrt{\alpha}e^{\frac{\iota \pi}{M}}\mathcal{S}_{C}$ at a given round of transmission. As a result of this instantaneous processing at Charlie, the baseband symbol received at Bob is of the form

\begin{small}
\begin{equation}
\label{eq:rx_symbool_bob}
r_{B} = \left\{ \begin{array}{llllllllll}
h_{CB}\ y + n_{B}, & \mbox{ if } x = 0, \hat{x}_{C} = 0;\\
h_{CB}\ \sqrt{\alpha} e^{\frac{\iota \pi}{M}} y + n_{B}, & \mbox{ if } x = 0, \hat{x}_{C} = 1;\\
h_{CB}\ y + h_{AB}\sqrt{1-\alpha} + n_{B}, & \mbox{ if } x = 1, \hat{x}_{C} = 0;\\
h_{CB}\ \sqrt{\alpha} e^{\frac{\iota \pi}{M}} y + h_{AB}\sqrt{1-\alpha} + n_{B}, & \mbox{ if } x = 1, \hat{x}_{C} = 1;
\end{array}
\right.
\end{equation}
\end{small}
\indent where $h_{CB} \sim {\cal CN}(0, 1)$ is the baseband channel between Charlie and Bob, $h_{AB} \sim {\cal CN}(0, 1)$ is the baseband channel from Alice to Bob, and $n_{B} \sim \mathcal{CN}(0, N_{o})$ is the AWGN at Bob. Since Charlie communicates with Bob using coherent signaling method, we assume that Bob has perfect knowledge of the channel realization $h_{CB}$. Since Alice has shifted her frequency to $f_{CB}$ as a reaction against jamming, and no pilots are communicated on $f_{CB}$, we assume that Bob has no knowledge $h_{AB}$. We assume that all the channel realizations and additive noise components are statistically independent. Henceforth, throughout this paper, we denote $\mbox{SNR} = \frac{1}{N_{o}}$. 

With the signal model in \eqref{eq:rx_symbool_bob}, Bob needs to decode the information symbols of both Alice and Charlie. To assist joint detection of information symbols of both nodes, Alice's information symbol can be recovered by observing two metrics: (i) whether $r_{B}$ is closer to a point in $\sqrt{\alpha}e^{\frac{\iota \pi}{M}}\mathcal{S}_{C}$ instead of a point in $\mathcal{S}_{C}$, and (ii) whether the energy contributed by the effective additive noise is $1-\alpha + N_{o}$ instead of $N_{o}$. Overall, this framework of joint decoding of the information symbols of Alice and Charlie corresponds to applying a combination of coherent and non-coherent decoding mechanism on an equivalent multiple access channel model induced by the SC-FFFD relaying protocol. 

\subsection{Observation on Power Measurements at Dave}

In the FFFD relaying protocol, both Alice and Charlie communicate simultaneously on the frequency band $f_{CB}$. As a result, with the assumption that the decoding error introduced at Charlie is negligible, the average power measured on $f_{CB}$ is unity irrespective of whether Alice transmits symbol 1 or symbol 0. Meanwhile, whenever Alice transmits symbol $1$, upon correctly decoding it at Charlie, in the FFFD protocol, Charlie transmits its residual power $(1-\alpha)$ on $f_{AB}$. Concurrently, Alice also transmits its residual power $\alpha$ on $f_{AB}$, and as a result, the total average power observed on $f_{AB}$ continues to be one and zero when symbol 1 and symbol 0 is transmitted by Alice, respectively. This implies that upon power measurements at Dave, the power levels measured on $f_{CB}$ continues to be unity, which is same as the power measured before Charlie helped Alice. Similarly, the power levels measured on $f_{AB}$ continues to be that of OOK, which is same as the power measured on $f_{AB}$ before Charlie helped Alice. 

\begin{figure*}
\begin{small}
\begin{eqnarray}
\frac{dP_{11}}{d\alpha}  = \left(\frac{N_{C0}}{N_{C0} + \sigma_{AC}^2(1-\alpha)}\right)^{\frac{N_{C0} + \sigma_{AC}^2(1-\alpha)}{\sigma_{AC}^2(1-\alpha)}} \times 
\frac{\sigma_{AC}^2(1-\alpha)\left(ln\left[\frac{N_{C0}}{N_{C0}+\sigma_{AC}^2(1-\alpha)}\right] + 1\right) + N_{C0} ln\left[\frac{N_{C0}}{N_{C0}+\sigma_{AC}^2(1-\alpha)}\right]}{\sigma_{AC}^2(1-\alpha)^2}\label{eq_diff_P11}.
\end{eqnarray}
\end{small}
\hrule
\end{figure*}



\subsection{Error performance at Charlie.}

With non-coherent energy detection, Charlie makes a decision using the likelihood ratio $f(r_{C}|x=0) \underset{1}{\overset{0}{\gtrless}} f(r_{C}|x=1),$ where $f(r_{C}|x=i)$ is the probability density function of $r_{C}$ conditioned on $x = i$, for $i \in \mathcal{S}_{A}$. With that, the probability of decoding symbol $0$ as symbol $1$ at Charlie is given by
\begin{equation*}
P_{01} = \mbox{Pr}\left\{\left.\frac{\frac{1}{\pi N_{C0}} e^{-\frac{|r_C|^2}{N_{C0}}}}{\frac{1}{\pi N_{C1}} e^{-\frac{|r_C|^2}{N_{C1}}}}<1\right\vert r_C = n_{C}\right\} = e^{-\frac{\beta}{N_{C0}}},
\end{equation*}
where $N_{C0} = N_{o}$, $N_{C1} = \sigma^{2}_{AC}(1-\alpha) + N_{o}$, $\beta = \frac{N_{C0} N_{C1}}{N_{C0} - N_{C1}} ln\left(\frac{N_{C0}}{N_{C1}}\right)$. Thus, the probability of correct decoding of symbol $0$ is given by $P_{00} = 1-P_{01}= 1-e^{-\frac{\beta}{N_{C0}}}$. Similarly, the probability of decoding symbol $1$ as symbol $0$ can be computed as
\begin{eqnarray*}
P_{10} & = & \mbox{Pr}\left\{\left.\frac{\frac{1}{\pi N_{C1}} e^{-\frac{|r_C|^2}{N_{C1}}}}{\frac{1}{\pi N_{C0}} e^{-\frac{|r_C|^2}{N_{C0}}}}<1\right\vert r_C =h_{AC}\sqrt{1-\alpha} + n_C\right\}\label{FD_P10},\\
& = & 1-e^{-\frac{\beta}{N_{C1}}}.
\end{eqnarray*}
Thus, the probability of correct decoding of symbol $1$ is given by $P_{11} = 1-P_{10}=e^{-\frac{\beta}{N_{C1}}}$. Note that both $P_{01}$ and $P_{10}$ are functions of $\alpha$. In the following lemmas, we present some insights on $P_{01}$ and $P_{10}$.

\begin{lemma}\label{L_p11}
The term $P_{11}$ decreases as $\alpha$ increases in the interval $(0, 1)$.
\end{lemma}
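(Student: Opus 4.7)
The plan is to reduce the monotonicity claim to a purely analytic fact about a single-variable function via a clean change of variables, which avoids grinding through the messy derivative in \eqref{eq_diff_P11}.

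First I would introduce the auxiliary variable $u \coloneqq N_{C1}/N_{C0} = 1 + \sigma_{AC}^2(1-\alpha)/N_o$. Since $\alpha \in (0,1)$ and $\sigma_{AC}^2 \geq 1$, we have $u > 1$, and $u$ is a strictly decreasing affine function of $\alpha$. A direct simplification of $\beta/N_{C1}$ gives
\begin{equation*}
\frac{\beta}{N_{C1}} \;=\; \frac{N_{C0}}{N_{C0}-N_{C1}}\,\ln\!\left(\frac{N_{C0}}{N_{C1}}\right) \;=\; \frac{\ln u}{u-1},
\end{equation*}
so that $P_{11}(\alpha) = e^{-g(u)}$ with $g(u) \coloneqq \ln(u)/(u-1)$. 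Because $e^{-x}$ is strictly decreasing and $u$ is strictly decreasing in $\alpha$, proving that $P_{11}$ is decreasing in $\alpha$ is equivalent to proving that $g$ is strictly decreasing on $(1,\infty)$.

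Next I would establish the monotonicity of $g$. Differentiating,
\begin{equation*}
g'(u) \;=\; \frac{1 - \tfrac{1}{u} - \ln u}{(u-1)^2},
\end{equation*}
so the sign is determined by $h(u) \coloneqq 1 - 1/u - \ln u$. I would observe that $h(1) = 0$ and that $h'(u) = 1/u^2 - 1/u = (1-u)/u^2 < 0$ for all $u > 1$. Hence $h(u) < 0$ on $(1,\infty)$, which forces $g'(u) < 0$ there, so $g$ is strictly decreasing on $(1,\infty)$ as required.

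Putting the pieces together: as $\alpha$ increases in $(0,1)$, $u$ strictly decreases, $g(u)$ strictly increases (by the step above), $-g(u)$ strictly decreases, and therefore $P_{11} = e^{-g(u)}$ strictly decreases. No genuine obstacle arises here; the only subtlety is recognizing that the apparent complexity of \eqref{eq_diff_P11} collapses once one sees the dependence of $\beta/N_{C1}$ on the single ratio $u$. A referee who prefers a direct argument can alternatively read off the sign of \eqref{eq_diff_P11} by noting that the bracketed expression is exactly $-(u-1)^2 g'(u) \cdot N_{C0}$ up to a positive prefactor, which reaches the same conclusion.
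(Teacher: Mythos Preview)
Your argument is correct. The substitution $u=N_{C1}/N_{C0}$ cleanly reduces $P_{11}=e^{-\beta/N_{C1}}$ to $e^{-g(u)}$ with $g(u)=\ln u/(u-1)$, and the standard inequality $\ln u>1-1/u$ for $u>1$ gives $g'<0$, which is exactly what you need. The chain of monotonicities ($\alpha\uparrow\Rightarrow u\downarrow\Rightarrow g(u)\uparrow\Rightarrow P_{11}\downarrow$) is stated carefully and is valid on the whole interval.

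The paper proceeds differently: it differentiates $P_{11}$ directly in $\alpha$ to obtain \eqref{eq_diff_P11} and then splits into two regimes. For $\alpha\le\alpha_1\triangleq 1-1.71\,N_{C0}/\sigma_{AC}^2$ (the constant $1.71\approx e-1$ arises from $\ln(1+\omega)=1$), both summands in the numerator of \eqref{eq_diff_P11} are shown to be negative; for $\alpha>\alpha_1$ the paper reparameterizes by $\omega=\sigma_{AC}^2(1-\alpha)/N_{C0}$, rewrites $P_{11}=(1+\omega)^{-1/\omega}$, and argues monotonicity in that region separately. Your single change of variables is essentially the same reparametrization (your $u$ is $1+\omega$), but you apply it globally and avoid both the messy derivative and the case split at $\alpha_1$. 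What the paper's route buys is an explicit expression \eqref{eq_diff_P11} for $dP_{11}/d\alpha$, which is occasionally convenient elsewhere; what your route buys is a uniform, two-line proof that sidesteps the somewhat delicate sign analysis of \eqref{eq_diff_P11}.
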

\begin{proof}
The term $P_{11}$ can be rewritten as
\begin{equation}
\label{p11_sim}
P_{11} = \left(\frac{N_{C0}}{N_{C0} + \sigma_{AC}^2(1 -\alpha)}\right)^{\frac{N_{C0}}{\sigma_{AC}^2(1-\alpha)}}.
\end{equation}
Differentiating $P_{11}$ w.r.t. $\alpha$, we get \eqref{eq_diff_P11}. If we closely observe the numerator of the second term, we have $N_{C0} ln\left[\frac{N_{C0}}{N_{C0}+\sigma_{AC}^2(1-\alpha)}\right]<0$, and for $\sigma_{AC}^2(1-\alpha)\left(ln\left[\frac{N_{C0}}{N_{C0}+\sigma_{AC}^2(1-\alpha)}\right] + 1\right)<0$, this implies $\alpha \leq \alpha_{1} \triangleq 1-1.71\frac{N_{C0}}{\sigma_{AC}^2}$. Let $\alpha_1^+ = 1-\omega\frac{N_{C0}}{\sigma_{AC}^2}$ such that $0 < \omega< 1.71$. Substituting $\alpha_1^+$ in \eqref{p11_sim} yields $\frac{1}{(1+\omega)^{\frac{1}{\omega}}}$, which is a decreasing function when $0 < \omega <1.71$. Therefore, $P_{11}$ decreases with $\alpha$ in the interval $(0, 1)$.

\end{proof}

\begin{lemma}\label{lemma_p11_p00_domination}
The expressions for $P_{01}$ and $P_{10}$ are such that $P_{11} > P_{10}$ for $\alpha \in (0, 1 - \frac{N_{C0}}{\sigma^{2}_{AC}})$, and $P_{00} > P_{01}$ for $\alpha \in (0, 1)$.
\end{lemma}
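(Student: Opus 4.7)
The plan is to reduce both inequalities to comparisons between $\beta/N_{C0}$ (resp.\ $\beta/N_{C1}$) and $\ln 2$, and then bound these quantities by substituting the single variable $t \triangleq N_{C1}/N_{C0} = 1 + \sigma_{AC}^{2}(1-\alpha)/N_{C0}$, which is strictly greater than $1$ on $\alpha \in (0,1)$.

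First, I would rewrite the decision thresholds. Since $P_{00} = 1 - e^{-\beta/N_{C0}}$ and $P_{01} = e^{-\beta/N_{C0}}$, the inequality $P_{00} > P_{01}$ is equivalent to $e^{-\beta/N_{C0}} < 1/2$, i.e., $\beta/N_{C0} > \ln 2$. Substituting the definition of $\beta$ gives $\beta/N_{C0} = \frac{t \ln t}{t-1}$. I would then show that $f(t) \triangleq \frac{t \ln t}{t-1}$ is monotonically increasing on $(1,\infty)$ with $\lim_{t \to 1^{+}} f(t) = 1$. Monotonicity follows from the derivative $f'(t) = \frac{t - 1 - \ln t}{(t-1)^{2}}$, whose numerator is non-negative for $t > 1$ by the elementary inequality $\ln t \leq t - 1$. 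Hence $f(t) > 1 > \ln 2$ for every $t > 1$, which gives $P_{00} > P_{01}$ on the entire interval $\alpha \in (0,1)$.

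Next, for the inequality $P_{11} > P_{10}$, the same reasoning shows it is equivalent to $\beta/N_{C1} < \ln 2$, and substituting $\beta$ yields $\beta/N_{C1} = \frac{\ln t}{t-1}$. I would therefore prove that $\frac{\ln t}{t-1} < \ln 2$ is equivalent to $t > 2$. To do this, consider $g(t) \triangleq \ln t - (t-1)\ln 2$; $g$ is strictly concave since $g''(t) = -1/t^{2} < 0$, and $g(1) = g(2) = 0$. Concavity together with these two roots forces $g(t) < 0$ for all $t > 2$ and $g(t) > 0$ on $(1,2)$. Translating $t > 2$ back through $t = 1 + \sigma_{AC}^{2}(1-\alpha)/N_{C0}$ yields exactly $\alpha < 1 - N_{C0}/\sigma_{AC}^{2}$, matching the claim.

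The main obstacle, in my view, is not any single estimate but recognizing the correct change of variables $t = N_{C1}/N_{C0}$ that collapses both $\beta/N_{C0}$ and $\beta/N_{C1}$ into the tractable one-variable functions $\frac{t \ln t}{t-1}$ and $\frac{\ln t}{t-1}$, respectively. Once this reduction is in place, both statements follow from standard properties of $\ln t$ (the inequality $\ln t \leq t - 1$ for the first part, and concavity with two known zeros for the second). A minor technical care is needed at the boundary $t \to 1^{+}$ (i.e., $\alpha \to 1^{-}$), where both ratios have removable indeterminacies; taking the limit via L'H\^{o}pital's rule confirms $\beta/N_{C0} \to 1$ and $\beta/N_{C1} \to 1$, which is consistent with the monotonicity arguments above.
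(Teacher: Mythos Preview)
Your proof is correct and follows the same overall reduction as the paper: both substitute the single ratio variable (the paper writes $\delta = \sigma_{AC}^{2}(1-\alpha)/N_{C0}$, which is your $t-1$) and reduce each claim to whether $P_{01}$ or $P_{10}$ crosses $1/2$. The execution differs in two places worth noting. For $P_{11}>P_{10}$, the paper leans on the monotonicity of $P_{10}$ established in Lemma~\ref{L_p11} and then locates the crossing $P_{10}=1/2$ at $\sigma_{AC}^{2}(1-\alpha)=N_{C0}$ by inspection; your concavity argument for $g(t)=\ln t-(t-1)\ln 2$ is self-contained and gives the same threshold $t=2$. For $P_{00}>P_{01}$, the paper bounds $\left(\tfrac{1}{1+\delta}\right)^{(1+\delta)/\delta}$ by invoking $(1+\delta)^{1/\delta}>2$ for $\delta>0$, which is actually false once $\delta\geq 1$; your monotonicity argument for $f(t)=t\ln t/(t-1)$ with $\lim_{t\to 1^{+}}f(t)=1>\ln 2$ avoids this slip and proves the inequality on the full range. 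So the two routes are close cousins, but yours is the more careful of the pair.
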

\begin{proof}
Since $P_{10}$ is an increasing function of $\alpha$, we are interested in computing the value of $\alpha$ for which $P_{10} = 0.5$. The expression for $P_{10}$ can be rewritten as $1 - \left(\frac{N_{C0}}{N_{C1}}\right)^{\frac{N_{C0}}{N_{C1}-N_{C0}}}$. This implies that $\left(\frac{N_{C0}}{N_{C0} + \sigma^{2}_{AC}(1 - \alpha)}\right)^{\frac{N_{C0}}{\sigma^{2}_{AC}(1-\alpha)}} = 0.5$ only when $\sigma^{2}_{AC}(1 - \alpha) = N_{C0}$. Therefore, until $\alpha < 1 - \frac{N_{C0}}{\sigma^{2}_{AC}}$, the term $P_{11}$ dominates $P_{10}$. Note that at high SNR, i.e., when $N_{C0} << 1$, this implies that $P_{11}$ dominates $P_{10}$ in the interval $\alpha \in (0, 1 - \epsilon)$, where $\epsilon$ is a negligible number. For the second result of this lemma,  the expression for $P_{01}$ can be rewritten as 
\begin{equation*}
\left(\frac{N_{C0}}{N_{C0} + \sigma^{2}_{AC}(1-\alpha)}\right)^{\frac{N_{C0} + \sigma^{2}_{AC}(1-\alpha)}{\sigma^{2}_{AC}(1-\alpha)}}.
\end{equation*}
Denoting $\sigma^{2}_{AC} (1- \alpha) = \delta N_{C0}$, for $\delta > 0$, the above expression can be written as 
\begin{equation*}
\left(\frac{1}{1+ \delta}\right)^{\frac{1 + \delta}{\delta}} = \left(\frac{1}{1+ \delta}\right)\frac{1}{(1 + \delta)^{\frac{1}{\delta}}} < \left(\frac{1}{1+ \delta}\right)\frac{1}{2},
\end{equation*}
where the inequality follows because of the lower bound $(1 + \delta)^{\frac{1}{\delta}} > 2$ when $\delta > 0$. This implies that $P_{01}$ never hits $0.5$, and therefore, $P_{00}$ dominates $P_{01}$ in the interval $\alpha \in (0, 1)$. 
\end{proof}

\begin{lemma}\label{P01_P10}
For any $0\leq\alpha\leq1$, we have $P_{01} < P_{10}$.
\end{lemma}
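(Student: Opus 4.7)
The plan is to rewrite $P_{01}<P_{10}$ equivalently as $e^{-\beta/N_{C0}}+e^{-\beta/N_{C1}}<1$ and then collapse this two-parameter statement into a single-variable calculus problem.

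First I would introduce the ratio $t := N_{C1}/N_{C0}\geq 1$; from the closed form of $\beta$, a short computation yields $\beta/N_{C0}=t\ln t/(t-1)$ and $\beta/N_{C1}=\ln t/(t-1)$, whose difference is exactly $\ln t$. This gives the clean identity $e^{-\beta/N_{C0}}=e^{-\beta/N_{C1}}/t$, so the target inequality collapses to $e^{-\beta/N_{C1}}(t+1)/t<1$, equivalently $e^{\beta/N_{C1}}>(t+1)/t$.

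Next I would substitute $\beta/N_{C1}=\ln t/(t-1)$ and raise both sides to the positive power $t-1$, transforming the claim into $t^{t}>(t+1)^{t-1}$ for $t>1$. Taking logarithms, it then suffices to show $g(t) := t\ln t-(t-1)\ln(t+1)>0$ for all $t>1$. The degenerate boundary $t=1$ (i.e., $\alpha=1$) would be handled by a limiting argument that yields $\beta\to N_{C0}$ and hence $P_{01}=e^{-1}<1-e^{-1}=P_{10}$, so no extra case analysis is needed at $\alpha=1$.

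The main obstacle is establishing $g(t)>0$, and my plan is a two-derivative argument rather than a direct estimate of the logarithm. Since $g(1)=0$, it suffices to show $g'(t)>0$ on $(1,\infty)$. A direct differentiation gives $g'(t)=\ln\!\bigl(t/(t+1)\bigr)+2/(t+1)$, which evaluates to $1-\ln 2>0$ at $t=1$ and tends to $0$ as $t\to\infty$. I would then verify that $g''(t)=(1-t)/[t(t+1)^{2}]<0$ for $t>1$, so $g'$ strictly decreases from $1-\ln 2$ down to its limit $0$ and therefore stays positive on $[1,\infty)$. This propagates back to $g(t)>0$, completing the proof. The more naive route of bounding $\ln(t/(t+1))$ by a Taylor remainder is avoidable by exploiting the monotonicity of $g'$ combined with its limit at infinity, which is the cleanest way to dispose of the lemma.
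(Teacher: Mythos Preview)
Your proof is correct. Both you and the paper collapse the two parameters to a single ratio variable and land on an equivalent power-type inequality: the paper computes $P_{10}/P_{01}=(1+\delta)\bigl((1+\delta)^{1/\delta}-1\bigr)$ with $\delta=\sigma_{AC}^{2}(1-\alpha)/N_{C0}$ and asserts this exceeds $1$ because $(1+\delta)^{1/\delta}>2$, while you reach the equivalent $t^{t}>(t+1)^{t-1}$ with $t=N_{C1}/N_{C0}=1+\delta$ and settle it via the second-derivative analysis of $g(t)=t\ln t-(t-1)\ln(t+1)$. Your route is slightly longer but more robust: the paper's shortcut $(1+\delta)^{1/\delta}>2$ actually fails once $\delta>1$ (e.g.\ $\delta=3$ gives $4^{1/3}\approx 1.59$), so as written it only handles $\alpha$ near $1$, whereas your monotonicity argument on $g'$---with $g'(1)=1-\ln 2>0$, $g''<0$, and $g'(t)\to 0$---covers the full range cleanly and also disposes of the $\alpha=1$ boundary by limit.
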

\begin{proof}
The ratio $\frac{P_{10}}{P_{01}}$ is given by
\begin{eqnarray*}
\frac{P_{10}}{P_{01}} & = & \frac{1-e^{-\frac{\beta}{N_{C1}}}}{e^{-\frac{\beta}{N_{C0}}}}=\left(\frac{N_{C1}}{N_{C0}}\right)^{\frac{N_{C1}}{N_{C1}-N_{C0}}}-\left(\frac{N_{C1}}{N_{C0}}\right),\\
& = & \frac{N_{C1}}{N_{C0}}\left[\left(\frac{N_{C1}}{N_{C0}}\right)^{\frac{N_{C0}}{N_{C1}-N_{C0}}}-1\right],\\
& = & (1 + \delta)\left((1 + \delta)^{\frac{1}{\delta}} - 1\right),
\end{eqnarray*}
where the last equality is written by substituting $\sigma^{2}_{AC}(1-\alpha) = \delta N_{C0}$, for $\delta > 0$. Since $N_{C1}\geq N_{C0}$, we have $(1 + \delta)^{\frac{1}{\delta}} > 2$, and therefore, we conclude that $P_{01} <  P_{10}$.
\end{proof}

Having understood the behavior of $P_{11}$ and $P_{00}$ as a function of $\alpha$, we proceed to analyze the error performance of jointly decoding the information symbols of Alice and Charlie at Bob.

\subsection{Error performance at Bob}

Based on the signal model in \eqref{eq:rx_symbool_bob}, it is clear that Bob has to make use of a combination of coherent and non-coherent detection method to jointly decode the information symbols of Alice and Charlie. It is worthwhile to note that when Alice transmits symbol $1$, the symbol received at Bob has higher noise variance as compared to when symbol $0$ was transmitted. Therefore, we represent the effective noise variance at Bob as $N_{B0} = N_o$ when Alice transmits symbol $0$, and also $N_{B1} = N_o + (1-\alpha)$ when Alice transmits symbol $1$. To arrive at $N_{B1}$, we have used the fact that $h_{AB} \sim\mathcal{CN}(0, 1)$. From first principles, a Maximum A Posteriori (MAP) detector to jointly decode $i \in \{0, 1\}$ for OOK, and $j \in \{0, 1, \ldots, M-1\}$ for the PSK symbol $e^{\frac{\iota 2\pi (j + 0.5)}{M}}$, for the above signal model is given by
\begin{equation}
\label{eq:joint_MAP_Bob}
\hat{i},\hat{j} = \arg \max_{i, j} g\left(r_B|x = i, y = e^{\frac{\iota 2 \pi (j + 0.5)}{M}}, h_{CB} \right),
\end{equation}
where $g\left(r_B|x = i, y = e^{\frac{\iota 2 \pi (j + 0.5)}{M}}, h_{CB} \right)$ is the probability density function of $r_{B}$ subject to a given realizations of $i$ and $j$, and also the realization of the channel $h_{CB}$. Since Charlie may also add error events when decoding $x$, the conditional density function of $r_{B}$ is a Gaussian mixture weighed by the probabilities of decoding error at Charlie. In particular, we have $g\left(r_B|x = 0, y = e^{\frac{\iota 2 \pi (j + 0.5)}{M}}, h_{CB} \right)$ given by
\begin{eqnarray*}
P_{00} f_{0}(r_{B}| t = e^{\frac{\iota 2 \pi (j + 0.5)}{M}}, h_{CB}) + \\
\label{eq:gm_1}
~P_{01}f_{0}(r_{B}| t = \sqrt{\alpha} e^{\frac{\iota \pi }{M}} e^{\frac{\iota 2 \pi (j + 0.5)}{M}}, h_{CB})
\end{eqnarray*}
such that $f_{0}(r_{B}|t, h_{CB}) = \frac{1}{\pi N_{B0}}e^{-\frac{|r_{B} - h_{CB}t|^2}{N_{B0}}}$. Similarly, we have $g\left(r_B|x = 1, y = e^{\frac{\iota 2 \pi (j + 0.5)}{M}}, h_{CB} \right)$ given by
\begin{eqnarray*}
\label{eq:gm_2}
P_{10} f_{1}(r_{B}| t = e^{\frac{\iota 2 \pi (j + 0.5)}{M}}, h_{CB}) + \nonumber\\
~P_{11}f_{1}(r_{B}| t = \sqrt{\alpha} e^{\frac{\iota \pi }{M}} e^{\frac{\iota 2 \pi (j + 0.5)}{M}}, h_{CB}),
\end{eqnarray*}
such that $f_{1}(r_{B}|t, h_{CB}) = \frac{1}{\pi N_{B1}}e^{-\frac{|r_B-h_{CB}t|^2}{N_{B1}}}$. 

It is straightforward to note that the average probability of error of the joint MAP decoder in \eqref{eq:joint_MAP_Bob} is a function of $\alpha$ since the intra-distance properties of the constellation observed by Bob varies with $\alpha$. Therefore, an important task is to compute $\alpha \in (0, 1)$ that minimizes this average probability of error. However, we notice that evaluating the average probability of error of the MAP decoder is non-trivial mainly due to the intricacies involved in handling Gaussian mixtures. Towards obtaining a near-optimal solution, we present an approximation on the MAP decoder, and subsequently compute the value of $\alpha$ that minimizes the probability of error of the sub-optimal decoder 

\section{Fast Forward Full Duplex Dominant Decoder}
\label{sec:subopt_decoder}

When handling Gaussian mixture as \emph{a priori} probability density function in the MAP detector of \eqref{eq:joint_MAP_Bob}, it is well known that the decoding metric \cite[Section II.A]{RaM}
\begin{equation}
\label{eq:joint_MAP_Bob_max}
\hat{i},\hat{j} = \arg \max_{i, j} g_{m}\left(r_B|x = i, y = e^{\frac{\iota 2 \pi (j+ 0.5)}{M}}, h_{CB} \right),
\end{equation}
provides near-optimal error performance where $g_{m}\left(r_B|x = i, y = e^{\frac{\iota 2 \pi (j+ 0.5)}{M}}, h_{CB} \right)$ is given in \eqref{eq:gm_1_max} and \eqref{eq:gm_2_max}, for $i = 0$ and $i = 1$, respectively. 
\begin{figure*}
\begin{eqnarray}
g_{m}\left(r_B|x = 0, y = e^{\frac{\iota 2 \pi (j+ 0.5)}{M}}, h_{CB} \right) = \max \left(P_{00} f_{0}(r_{B}| t = e^{\frac{\iota 2 \pi (j+ 0.5)}{M}}, h_{CB}), P_{01}f_{0}(r_{B}| t = \sqrt{\alpha} e^{\frac{\iota \pi }{M}} e^{\frac{\iota 2 \pi (j+ 0.5)}{M}}, h_{CB})\right)
\label{eq:gm_1_max}
\end{eqnarray}
\begin{eqnarray}
\label{eq:gm_2_max}
g_{m}\left(r_B|x = 1, y = e^{\frac{\iota 2 \pi (j+ 0.5)}{M}}, h_{CB} \right) = \max \left(P_{10} f_{1}(r_{B}| t = e^{\frac{\iota 2 \pi (j+ 0.5)}{M}}, h_{CB}), P_{11}f_{1}(r_{B}| t = \sqrt{\alpha} e^{\frac{\iota \pi }{M}} e^{\frac{\iota 2 \pi (j+ 0.5)}{M}}, h_{CB})\right)
\end{eqnarray}
\hrule
\end{figure*}
Furthermore, in \eqref{eq:gm_1_max} and \eqref{eq:gm_2_max}, we note that the terms $P_{11}$ and $P_{00}$ respectively dominate $P_{10}$ and $P_{01}$ due to the results in Lemma \ref{lemma_p11_p00_domination}. As a result, we present a sub-optimal decoder in the following definition.

\begin{definition}
Using the results in Lemma \ref{lemma_p11_p00_domination}, the decoding metric in \eqref{eq:joint_MAP_Bob_max} can be further reduced by dropping the terms with $P_{01}$ and $P_{10}$ as
\begin{equation}
\label{eq:joint_MAP_Bob_dominant}
\hat{i},\hat{j} = \arg \max_{i, j} g_{a}\left(r_B|x = i, y = e^{\frac{\iota 2 \pi (j+ 0.5)}{M}}, h_{CB} \right),
\end{equation}
where 
\begin{eqnarray*}
\label{eq_pdf1}
g_{a}\left(r_B|x = 0, y, h_{CB} \right) & = & \frac{P_{00}}{\pi N_{B0}}e^{-\frac{|r_{B} - h_{CB}\ y|^2}{N_{B0}}},\\
\label{eq_pdf2}
g_{a}\left(r_B|x = 1, y, h_{CB} \right) & = & \frac{P_{11}}{\pi N_{B1}}e^{-\frac{|r_{B} - h_{CB}\ \sqrt{\alpha}e^{\frac{\pi}{M}}y|^2}{N_{B1}}}.
\end{eqnarray*}
\end{definition}
Henceforth, throughput the paper, we refer to the decoder in \eqref{eq:joint_MAP_Bob_dominant} as the Fast-Forward Full-Duplex Joint Dominant (FFFD-JD) decoder. Based on the decoding metric in \eqref{eq:joint_MAP_Bob_dominant}, Bob uses $r_{B}$ to decode to a point in the constellation $\mathcal{S}_{C} \cup \sqrt{\alpha} e^{\frac{\iota \pi}{M}}\mathcal{S}_{C} \subset \mathbb{C}$, wherein the Gaussian distribution centered around the points in $\mathcal{S}_{C}$ has variance $N_{B0}$, whereas the Gaussian distribution centered around the points in $\sqrt{\alpha} e^{\frac{\iota \pi}{M}}\mathcal{S}_{C}$ has variance $N_{B1}$. For instance, an example for the constellation $\mathcal{S}_{C} \cup \sqrt{\alpha} e^{\frac{\iota \pi}{M}}\mathcal{S}_{C}$ with $M = 4$ is shown in Fig. \ref{FFFD_CN_1}, where the set of circles denote $\mathcal{S}_{C}$ and the set of diamonds denote $\sqrt{\alpha} e^{\frac{\iota \pi}{M}}\mathcal{S}_{C}$. 

In the next section, we compute upper bounds on the probability of error of jointly decoding the symbols of OOK and PSK constellation using the FFFD JD decoder. Subsequently, we use the upper bound to recover an appropriate value of $\alpha \in (0, 1)$ that minimizes the average probability of error of the FFFD JD decoder. 

\subsection{Error Performance of FFFD Joint Dominant Decoder}

With FFFD JD decoder as given in \eqref{eq:joint_MAP_Bob_dominant}, a pair $(i, j) \in \{0, 1\} \times \{0, 1, \ldots, M-1\}$ can be incorrectly decoded as $(\bar{i}, \bar{j})$ such that $(\bar{i}, \bar{j}) \neq (i, j)$ if 
\begin{equation*}
\Delta_{(i, j) -> (\bar{i}, \bar{j})} \triangleq \frac{g_{a}\left(r_B|x = \bar{i}, y = e^{\frac{\iota 2 \pi (\bar{j} + 0.5)}{M}}, h_{CB} \right)}{g_{a}\left(r_B|x = i, y = e^{\frac{\iota 2 \pi (j+ 0.5)}{M}}, h_{CB} \right)} \geq 1,
\end{equation*}
where $\Delta_{(i, j) -> (\bar{i}, \bar{j})}$ is the error event. Furthermore, if Alice and Charlie have chosen the pair $(i = 0, j)$, the probability that Bob incorrectly decodes to another pair $(\bar{i}, \bar{j})$, denoted by $\mbox{Pr}\left((0, j) \rightarrow (\bar{i}, \bar{j}) \right)$, is given by
\begin{eqnarray}
\label{eq:pair_wise_1}
\mbox{Pr}\left((0, j) \rightarrow (\bar{i}, \bar{j}) \right) = P_{00} \mbox{Pr}\left(\Delta_{(0, j) \rightarrow (\bar{i}, \bar{j})} \geq 1 \right\vert r_{B} = r_{00})\nonumber \\
+ ~P_{01} \mbox{Pr}\left(\Delta_{(0, j) \rightarrow (\bar{i}, \bar{j})} \geq 1 \right\vert r_{B} = r_{01}),
\end{eqnarray}
where $r_{00} = h_{CB}e^{\frac{\iota 2 \pi (j+ 0.5)}{M}} + n_{B}$ and $r_{01} = h_{CB}\sqrt{\alpha} e^{\frac{\iota \pi}{M}} e^{\frac{\iota 2 \pi (j+ 0.5)}{M}} + n_{B}$. Similarly, if Alice and Charlie have chosen the pair $(i = 1, j)$, the probability that Bob incorrectly decodes to another pair $(\bar{i}, \bar{j})$, denoted by $\mbox{Pr}\left((1, j) \rightarrow (\bar{i}, \bar{j}) \right)$, is given by
\begin{eqnarray}
\label{eq:pair_wise_2}
\mbox{Pr}\left((1, j) \rightarrow (\bar{i}, \bar{j}) \right) = P_{11} \mbox{Pr}\left(\Delta_{(1, j) \rightarrow (\bar{i}, \bar{j})} \geq 1 \right\vert r_{B} = r_{11})\nonumber \\
+ ~P_{10} \mbox{Pr}\left(\Delta_{(1, j) \rightarrow (\bar{i}, \bar{j})} \geq 1 \right\vert r_{B} = r_{10}),
\end{eqnarray}
where $r_{11} = h_{CB}\sqrt{\alpha} e^{\frac{\iota \pi}{M}} e^{\frac{\iota 2 \pi (j+ 0.5)}{M}} + h_{AB}\sqrt{1 -\alpha} + n_{B}$ and $r_{10} = h_{CB}e^{\frac{\iota 2 \pi (j+ 0.5)}{M}} + h_{AB}\sqrt{1 -\alpha} + n_{B}$. To compute $\mbox{Pr}\left((i, j) \rightarrow (\bar{i}, \bar{j}) \right)$, we have considered the error events when decoding Alice's symbols at Charlie. Overall, for a given $h_{CB}$, the probability of error of the decoder in \eqref{eq:joint_MAP_Bob_dominant} is given by
\begin{equation}
\label{eq:dominant_error_with_hCB}
\mbox{Pr}(error|h_{CB}) = \frac{1}{2M} \sum_{(i, j)} \mbox{Pr}((\hat{i}, \hat{j}) \neq (i, j) |(i, j)),
\end{equation}
where $\mbox{Pr}((\hat{i}, \hat{j}) \neq (i, j) |(i, j))$ is the probability that Bob decodes to a pair other than $(i, j)$, when $(i, j)$ is chosen by Alice and Charlie. Furthermore, using union bound, we have
\begin{equation}
\label{eq:union_bound_pairwise}
\mbox{Pr}((\hat{i}, \hat{j}) \neq (i, j) |(i, j)) \leq \sum_{(\bar{i}, \bar{j}) \neq (i, j)} \mbox{Pr}\left((i, j) \rightarrow (\bar{i}, \bar{j}) \right),
\end{equation}
where $\mbox{Pr}\left((0, j) \rightarrow (\bar{i}, \bar{j}) \right)$ and $\mbox{Pr}\left((1, j) \rightarrow (\bar{i}, \bar{j}) \right)$ are given in \eqref{eq:pair_wise_1} and \eqref{eq:pair_wise_2}, respectively.

\begin{figure}[t]
\centering
\includegraphics[scale=0.5]{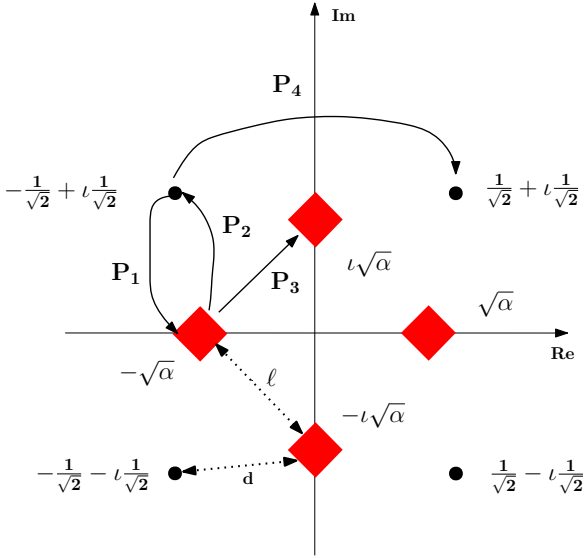}
\vspace{-0.2cm}
\caption{An example for the two-dimensional constellation observed by Bob as a result of the SC-FFFD relaying protocol. With $M = 4$, the set of black circles represent the constellation $S_{C}$ used by Charlie upon decoding symbol $0$ from Alice, whereas the set of red diamonds represent the scaled and rotated version of the constellation $\mathcal{S}_{C}$ used by Charlie upon decoding symbol $1$ from Alice.}
\label{FFFD_CN_1}
\end{figure}

In the following theorem, we present high SNR approximations on $\mbox{Pr}(error|h_{CB})$ given in \eqref{eq:dominant_error_with_hCB}.

\begin{theorem}
\label{thm_high_snr_unoin_bound}
At high SNR values, i.e., $N_{o} << 1$, the term $\mbox{Pr}(error|h_{CB})$ in \eqref{eq:dominant_error_with_hCB} is upper bounded as
\begin{eqnarray}
\label{high_snr_union_bound}
\mbox{Pr}(error|h_{CB}) & \leq & \mbox{Pr}\left((0, 1) \rightarrow (1, 1) \right) + \nonumber \\
& & \mbox{Pr}\left((1, 1) \rightarrow (1, 2) \right) + \nonumber \\
& & \mbox{Pr}\left((1, 1) \rightarrow (0, 1) \right).
\end{eqnarray}
\end{theorem}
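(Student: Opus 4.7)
The plan is to expand the union bound \eqref{eq:union_bound_pairwise} inside \eqref{eq:dominant_error_with_hCB}, which produces $2M(2M-1)$ pairwise terms, and to whittle this sum down to the three summands in \eqref{high_snr_union_bound} via symmetry and a high-SNR comparison of exponents. The first reduction is rotational symmetry: because both $\mathcal{S}_{C}$ and $\sqrt{\alpha}e^{\iota\pi/M}\mathcal{S}_{C}$ are invariant under rotation by $2\pi/M$ and $h_{CB}$ is a scalar, the inner term $\mbox{Pr}((\hat{i},\hat{j}) \neq (i,j)|(i,j))$ depends only on $i$, not on $j$. I can therefore fix the representative index $j = 1$, so that the outer sum reduces to bounding the pairwise terms emanating from only two pairs, $(0,1)$ and $(1,1)$.

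Next I would apply a Bhattacharyya-style upper bound to each summand in \eqref{eq:pair_wise_1}--\eqref{eq:pair_wise_2}. Each conditional probability $\mbox{Pr}(\Delta_{(i,j)\to(\bar{i},\bar{j})} \geq 1 | r_{B})$ compares two complex circular Gaussian hypotheses with means at the respective constellation points and variances drawn from $\{N_{B0}, N_{B1}\}$, and admits an upper bound of the form $\exp\!\left(-|h_{CB}|^2 d^2 / (2(N_{B i} + N_{B \bar i}))\right)$ times an $O(1)$ prefactor, where $d$ is the Euclidean distance between the two candidate means. By Lemma~\ref{lemma_p11_p00_domination}, the weights $P_{00}, P_{11}$ in \eqref{eq:pair_wise_1}--\eqref{eq:pair_wise_2} are $\Theta(1)$ at high SNR while $P_{01}, P_{10}$ are sub-leading, so the dominant contribution to \eqref{eq:union_bound_pairwise} comes from pairs whose exponent remains bounded as $N_o \to 0$.

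A short enumeration of candidates completes the argument. From $(0,1)$, the outer PSK neighbors $(0, 0), (0, 2)$ have combined variance $N_{B 0}+N_{B 0}=2N_o\to 0$ in the exponent, so their contribution decays super-exponentially and is negligible; the inner neighbors $(1, 0), (1, 1)$ sit at the smaller distance $\sqrt{1 + \alpha - 2\sqrt{\alpha}\cos(\pi/M)}$ but with the inflated combined variance $N_{B0}+N_{B1}\to 1-\alpha$, so their exponents remain finite and the two symmetric events are jointly captured by the single representative $(0,1) \to (1,1)$. Analogously, the dominant errors out of $(1,1)$ are to the adjacent inner PSK neighbor $(1, 2)$ and to the Alice-bit-flipped point $(0,1)$; these yield the remaining two summands of \eqref{high_snr_union_bound}. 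Every other pair either carries a strictly larger Euclidean distance or loses a factor of $N_o$ in its effective variance, and is hence exponentially suppressed.

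The principal obstacle is handling the asymmetric-variance pairwise error carefully: when $N_{B i} \neq N_{B \bar i}$ the likelihood-ratio decision region is a disk rather than a half-plane, so a naive Chernoff step can yield the wrong constant and miscount which exponents are bounded. A clean Bhattacharyya-coefficient calculation for complex circular Gaussians with unequal variances is needed to pin down both the exponent $|h_{CB}|^2 d^2 / (2(N_{B i} + N_{B \bar i}))$ and a benign $O(1)$ prefactor; once that is in place, ranking the $2M(2M-1)$ exponents as $N_o \to 0$ is routine and isolates exactly the three terms on the right-hand side of \eqref{high_snr_union_bound}.
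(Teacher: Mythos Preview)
Your proposal is correct and follows essentially the same route as the paper: expand the union bound, exploit the $2\pi/M$ rotational symmetry to reduce to a single representative index $j$, and retain only the nearest-neighbor pairwise terms that survive at high SNR. The paper's own proof is in fact sketchier than yours---it simply asserts that the remaining terms in the union bound ``are not dominant'' and multiplies each surviving representative by $2$ (for the two symmetric neighbors) and by $M$ (for the sum over $j$), then divides by the $1/(2M)$ prefactor in \eqref{eq:dominant_error_with_hCB}; your Bhattacharyya-exponent comparison is exactly the missing justification for that assertion, so if anything you are filling a gap rather than departing from the paper's argument.
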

\begin{proof}
At high SNR values, for a given $j \in \{0, 1, \ldots, M-1\}$, the term $\mbox{Pr}((\hat{i}, \hat{j}) \neq (0, j) |(0, j))$ is upper bounded as
\begin{equation}
\mbox{Pr}((\hat{i}, \hat{j}) \neq (0, j) |(0, j)) \leq 2 \mbox{Pr}\left((0, j) \rightarrow (1, j) \right),
\end{equation}
wherein the rest of the terms in the union bound are neglected since their contributions are not dominant. Furthermore, due to the symmetry in the constellation, we have
\begin{equation}
\label{eq:intermediate_1}
\sum_{j = 0}^{M-1} \mbox{Pr}((\hat{i}, \hat{j}) \neq (0, j) |(0, j)) \leq 2M \left(\mbox{Pr}\left((0, 1) \rightarrow (1, 1) \right)\right).
\end{equation}
Similarly, for a given $j \in \{0, 1, \ldots, M-1\}$, the term $\mbox{Pr}((\hat{i}, \hat{j}) \neq (1, j) |(1, j))$ is upper bounded 
\begin{eqnarray}
\mbox{Pr}((\hat{i}, \hat{j}) \neq (1, j) |(1, j)) \leq 2 \mbox{Pr}\left((1, j) \rightarrow (0, j) \right) + \nonumber\\
 2 \mbox{Pr}\left((1, j) \rightarrow (1, (j+1) \mbox{ modulo } M) \right),
\end{eqnarray}
wherein the rest of the terms in the union bound are neglected since their contributions are not dominant. Furthermore, due to the symmetry in the constellation, we have
\begin{eqnarray*}
\label{eq:intermediate_2}
\sum_{j = 0}^{M-1} \mbox{Pr}((\hat{i}, \hat{j}) \neq (1, j) |(1, j)) \leq 2M \left(\mbox{Pr}\left((1, 1) \rightarrow (0, 1) \right)\right) + \\
2M \left(\mbox{Pr}\left((1, 1) \rightarrow (1, 2) \right)\right).
\end{eqnarray*}
Finally, by substituting \eqref{eq:intermediate_1} and \eqref{eq:intermediate_2} in \eqref{eq:dominant_error_with_hCB}, we get \eqref{high_snr_union_bound}. This completes the proof. 
\end{proof}

Based on Theorem \ref{thm_high_snr_unoin_bound}, three pair-wise error events dominate the error probability of the joint decoder at high SNR values. At lower values of $\alpha$, the error probability is dominated by $\mbox{Pr}\left((1, 1) \rightarrow (1, 2) \right)$, which is dictated by the intra-constellation symbols of the rotated $M$-PSK constellation; this is because the minimum distance of $\sqrt{\alpha}e^{\frac{\iota \pi}{M}}\mathcal{S}_{C}$ (denoted by $\ell$ in Fig. \ref{FFFD_CN_1}) is small and also the effective noise variance of the received symbol is very high. In contrast, as $\alpha$ starts to ascend, $\mbox{Pr}\left((0, 1) \rightarrow (1, 1) \right)$ and $\mbox{Pr}\left((1, 1) \rightarrow (0, 1) \right)$ dominate, which are dictated by the distance between symbols of $\sqrt{\alpha}e^{\frac{\iota \pi}{M}}\mathcal{S}_{C}$ and $\mathcal{S}_{C}$ (denoted by $d$ in Fig. \ref{FFFD_CN_1}). In the following proposition, we evaluate $\mbox{Pr}\left((0, 1) \rightarrow (1, 1) \right)$, $\mbox{Pr}\left((1, 1) \rightarrow (1, 2) \right)$ and $\mbox{Pr}\left((1, 1) \rightarrow (0, 1) \right)$ by using their definition in \eqref{eq:pair_wise_1} and \eqref{eq:pair_wise_2}.

\begin{proposition}
$\mbox{Pr}\left((0, 1) \rightarrow (1, 1) \right) = P_{00}P_{1} + P_{01}P^{c}_{1}$,
where 
\begin{eqnarray}
P_{1} & = & Q_1\left(\frac{|A|}{\sigma_{B0}},\frac{\sqrt{\xi}}{\sigma_{B0}}\right)\label{FFFD_P1},\\
P_1^c & = & Q_1\left(\frac{|B|}{\sigma_{B0}},\frac{\sqrt{\xi}}{\sigma_{B0}}\right),
\end{eqnarray}
where $Q_1(\cdot, \cdot)$ is the Marcum-Q function such that $A = \frac{\gamma d N_{B0}}{N_{B0}-N_{B1}}$, $d=\sqrt{(1 + \alpha - 2\sqrt{\alpha}\mbox{cos}\frac{\pi}{M})}$, $\gamma = |h_{CB}|$, $\xi = \frac{N_{B0}N_{B,1}}{N_{B0}-N_{B1}}\left[ln\left(\frac{N_{B0}P_{11}}{N_{B1}P_{00}}\right)+\frac{\gamma^2d^2}{N_{B0}-N_{B1}}\right]$, $\sigma_{B0} =\sqrt{\frac{N_{B0}}{2}}$, and $B = \frac{\gamma d N_{B1}}{N_{B0}-N_{B1}}$.
\end{proposition}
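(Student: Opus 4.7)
The plan is to view the inequality $\Delta_{(0,1)\to(1,1)}\geq 1$ as a quadratic inequality in $r_B$ whose solution set is the exterior of a disk in the complex plane, and then recognize the resulting tail probabilities as Marcum-$Q$ evaluations. First I would expand the log-likelihood-ratio $\Delta_{(0,1)\to(1,1)}$ using the definitions of $g_a(r_B|x=0,\cdot)$ and $g_a(r_B|x=1,\cdot)$: taking logarithms, the event becomes
\begin{equation*}
\frac{|r_B - s_0|^2}{N_{B0}} - \frac{|r_B - s_1|^2}{N_{B1}} \;\geq\; \ln\!\frac{N_{B1}P_{00}}{N_{B0}P_{11}},
\end{equation*}
where I write $s_0 = h_{CB}\,e^{\iota 2\pi(1+0.5)/M}$ and $s_1 = h_{CB}\sqrt{\alpha}\,e^{\iota\pi/M}\,e^{\iota 2\pi(1+0.5)/M}$ for the two candidate noise-free received points.

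Next I would clear denominators by multiplying by $N_{B0}N_{B1}$ and collect the $|r_B|^2$, linear, and constant terms. Because $N_{B1}>N_{B0}$, the coefficient of $|r_B|^2$ is strictly positive, so completing the square casts the event as
\begin{equation*}
\left|\,r_B - c\,\right|^2 \;\geq\; \xi, \qquad c \;\triangleq\; \frac{N_{B1}s_0 - N_{B0}s_1}{N_{B1}-N_{B0}}.
\end{equation*}
The constant on the right will simplify cleanly because the cross terms in $|c|^2$ cancel against the leftover $N_{B1}|s_0|^2-N_{B0}|s_1|^2$ piece, leaving
$|s_0-s_1|^2 = \gamma^2 d^2$ (with $\gamma=|h_{CB}|$ and $d^2 = 1+\alpha-2\sqrt{\alpha}\cos(\pi/M)$) as the dominant geometric quantity, so that $\xi$ matches the expression in the statement after rewriting $N_{B1}-N_{B0}=-(N_{B0}-N_{B1})$. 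I expect this algebraic identification of $\xi$ — keeping the signs straight while going between $N_{B1}-N_{B0}$ and $N_{B0}-N_{B1}$ — to be the main obstacle, since everything downstream is mechanical once the disk representation is fixed.

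With the decision region nailed down, I would treat the two terms in \eqref{eq:pair_wise_1} separately. Under $r_B = r_{00}= s_0 + n_B$, the shifted variable satisfies $r_B - c = -\tfrac{N_{B0}}{N_{B1}-N_{B0}}(s_0-s_1) + n_B$, so the deterministic part has magnitude $\tfrac{N_{B0}\gamma d}{N_{B1}-N_{B0}} = |A|$. Under $r_B = r_{01}= s_1 + n_B$, the analogous calculation gives $r_B - c = \tfrac{N_{B1}}{N_{B1}-N_{B0}}(s_1-s_0) + n_B$, with deterministic magnitude $|B| = \tfrac{N_{B1}\gamma d}{N_{B1}-N_{B0}}$. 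In both cases the additive noise is $n_B\sim\mathcal{CN}(0,N_{B0})$ since Alice transmitted symbol $0$ and contributes no signal.

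Finally, I would invoke the standard Rician-tail identity: for $n\sim\mathcal{CN}(0,N_{B0})$ and any deterministic $\mu\in\mathbb{C}$,
\begin{equation*}
\mbox{Pr}\bigl(|\mu+n|\geq\sqrt{\xi}\bigr) \;=\; Q_1\!\left(\frac{|\mu|}{\sigma_{B0}},\frac{\sqrt{\xi}}{\sigma_{B0}}\right), \qquad \sigma_{B0}=\sqrt{N_{B0}/2}.
\end{equation*}
Applying this with $\mu$ equal to the deterministic shifts computed above yields exactly $P_1$ in the first case and $P_1^c$ in the second. Substituting these into \eqref{eq:pair_wise_1} delivers $\mbox{Pr}\bigl((0,1)\to(1,1)\bigr) = P_{00}P_1 + P_{01}P_1^c$, as claimed.
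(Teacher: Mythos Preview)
Your proposal is correct and follows the natural route: the paper states this proposition without proof, and your derivation---rewriting the dominant-decoder likelihood ratio as a quadratic in $r_B$, completing the square to obtain the disk condition $|r_B-c|^2\geq\xi$, and then identifying the tail probability of a shifted complex Gaussian as a Marcum-$Q$ evaluation---is exactly the computation that underlies the stated expressions for $P_1$, $P_1^c$, $A$, $B$, and $\xi$. The sign bookkeeping you flag (going between $N_{B1}-N_{B0}$ and $N_{B0}-N_{B1}$) is indeed the only delicate point, and your handling of it matches the paper's conventions.
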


\begin{proposition}
$\mbox{Pr}\left((1, 1) \rightarrow (0, 1) \right) = P_{11}P_{2} + P_{10}P^{c}_{2}$
where
\begin{eqnarray}
P_2 & = & 1- Q_{1}\left(\frac{|B|}{\sigma_{B1}},\frac{\sqrt{\xi}}{\sigma_{B1}}\right),\\
P_2^{c} & = & 1-Q_1\left(\frac{|A|}{\sigma_{B1}},\frac{\sqrt{\xi}}{\sigma_{B1}}\right),
\end{eqnarray}
where $\sigma_{B1}=\sqrt{\frac{N_{B1}}{2}}$, in addition to the notations defined in the previous proposition. 
\end{proposition}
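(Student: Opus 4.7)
The plan is to carry out the analysis in parallel with Proposition 1, but with the hypotheses reversed so that the true transmitted OOK symbol is $x=1$. Starting from \eqref{eq:pair_wise_2} with $(i,j)=(1,1)$ and $(\bar{i},\bar{j})=(0,1)$, I would identify $P_2$ as the conditional error probability of the FFFD-JD likelihood-ratio test when Charlie decodes correctly (so $r_B = r_{11}$), and $P_2^c$ as the same quantity when Charlie decodes Alice's bit wrongly (so $r_B = r_{10}$); the outer mixture weights are then $P_{11}$ and $P_{10}$, as claimed.

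Next, I would convert the event $\Delta_{(1,1)\to(0,1)} \geq 1$ into a geometric decision region. Writing $c_0 = h_{CB}e^{\iota 3\pi/M}$ and $c_1 = h_{CB}\sqrt{\alpha}e^{\iota \pi/M}e^{\iota 3\pi/M}$, the defining inequality becomes $|r_B - c_0|^2/N_{B0} - |r_B - c_1|^2/N_{B1} \leq \ln\bigl(P_{00}N_{B1}/(P_{11}N_{B0})\bigr)$. Clearing denominators (noting $N_{B1} > N_{B0}$) and completing the square yields $|r_B - \mu|^2 \leq \xi$, with $\mu = (N_{B1}c_0 - N_{B0}c_1)/(N_{B1} - N_{B0})$, and with the constant simplifying via $|c_0 - c_1|^2 = \gamma^2 d^2$ to exactly the $\xi$ introduced in the preceding proposition. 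A one-line computation then gives $|c_1 - \mu| = |B|$ and $|c_0 - \mu| = |A|$, so the same two radii that controlled $P_1$ and $P_1^c$ reappear here.

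The two essential differences from Proposition 1 are as follows. First, under true $x=1$, Alice's residual $h_{AB}\sqrt{1-\alpha}$ always contributes to $r_B$, so the effective additive noise at Bob is $\mathcal{CN}(0, N_{B1})$ in both sub-cases (Charlie right or wrong), with per-real-dimension standard deviation $\sigma_{B1} = \sqrt{N_{B1}/2}$. Second, the ratio that defines $\Delta_{(1,1)\to(0,1)}$ is the reciprocal of the one used in Proposition 1, so the error region is the \emph{interior} $|r_B - \mu|^2 \leq \xi$ rather than the exterior. Under $r_B = r_{11}$, the vector $r_B - \mu$ is complex Gaussian with mean $c_1 - \mu$ and variance $N_{B1}$, so $|r_B - \mu|/\sigma_{B1}$ is Rician with noncentrality parameter $|B|/\sigma_{B1}$; the identity $\mbox{Pr}(|Z| \leq t) = 1 - Q_1(|E[Z]|/\sigma,\, t/\sigma)$ then delivers $P_2 = 1 - Q_1(|B|/\sigma_{B1},\, \sqrt{\xi}/\sigma_{B1})$, and the same argument with mean $c_0 - \mu$ (under $r_B = r_{10}$) gives $P_2^c$. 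The main obstacle I expect is the algebraic bookkeeping that verifies the threshold produced by the square-completion is identical to the $\xi$ appearing in Proposition 1: because $N_{B0} - N_{B1} < 0$, the log-ratio and the quadratic term both pick up sign changes, and one has to be careful that these cancel cleanly so that both propositions share the same $\xi$, $A$, and $B$.
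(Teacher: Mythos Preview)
Your approach is correct and is exactly the natural derivation the paper has in mind; the paper states this proposition without an explicit proof, but your completion-of-the-square argument, the identification of $|c_1-\mu|=|B|$ and $|c_0-\mu|=|A|$, and the observation that the decision region flips from the exterior to the interior of the disc (hence the $1-Q_1(\cdot,\cdot)$ form) together with the $N_{B1}$ noise variance under both $r_{11}$ and $r_{10}$ reproduce the stated result. Your concern about the signs is well placed but resolves cleanly: since $N_{B1}>N_{B0}$, the leading coefficient $(N_{B1}-N_{B0})/(N_{B0}N_{B1})$ is positive, and the two sign flips in the log-ratio and the inequality direction cancel so that the same $\xi$, $A$, $B$ from the previous proposition reappear unchanged.
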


\begin{proposition}
$\mbox{Pr}\left((1, 1) \rightarrow (1, 2) \right) = P_{11}P_{3} + P_{10}P^{c}_{3}$
where
\begin{eqnarray}
P_3 & = & Q\left(\frac{\gamma \ell}{\sqrt{2N_{B1}}}\right),\\
P_3^{c} & = & 0.5,
\end{eqnarray}
where $\ell = 2\sqrt{\alpha}\sin{\frac{\pi}{M}}$, in addition to the notations defined in the previous propositions.
\end{proposition}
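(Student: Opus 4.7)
The plan is to decompose $\mathrm{Pr}((1,1)\to(1,2))$ via \eqref{eq:pair_wise_2} into the two contributions $P_{11}P_3$ and $P_{10}P_3^c$, corresponding to Charlie decoding Alice's bit correctly and incorrectly, and then compute each conditional probability separately using the decision rule implied by the FFFD-JD decoder.

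First I would compute $P_3 = \mathrm{Pr}(\Delta_{(1,1)\to(1,2)}\geq 1 \mid r_B = r_{11})$. Under this conditioning, Charlie has forwarded the scaled symbol, so $r_{11} = h_{CB}\sqrt{\alpha}e^{\iota\pi/M}e^{\iota 3\pi/M} + h_{AB}\sqrt{1-\alpha} + n_B$, with the aggregated noise $v = h_{AB}\sqrt{1-\alpha}+n_B \sim \mathcal{CN}(0, N_{B1})$. Both candidate hypotheses $(1,1)$ and $(1,2)$ lie in the class $i=1$, so the prefactor $P_{11}/(\pi N_{B1})$ cancels in the likelihood ratio, and $\Delta_{(1,1)\to(1,2)}\geq 1$ reduces to $|r_B - s_2|^2 \leq |r_B - s_1|^2$ with $s_1 = h_{CB}\sqrt{\alpha}e^{\iota\pi/M}e^{\iota 3\pi/M}$ and $s_2 = h_{CB}\sqrt{\alpha}e^{\iota\pi/M}e^{\iota 5\pi/M}$. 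Both signal points sit on the circle of radius $\gamma\sqrt{\alpha}$, so the perpendicular bisector through the origin is the decision boundary and the inter-point distance simplifies to $|s_1-s_2| = 2\gamma\sqrt{\alpha}\sin(\pi/M) = \gamma\ell$. Invoking the standard AWGN pairwise error formula with complex-noise variance $N_{B1}$ then yields $P_3 = Q(\gamma\ell/\sqrt{2N_{B1}})$.

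Next I would handle $P_3^c = \mathrm{Pr}(\Delta_{(1,1)\to(1,2)}\geq 1 \mid r_B = r_{10})$. Here Charlie transmits the un-scaled symbol $y = e^{\iota 3\pi/M}$, so $r_{10} = h_{CB}e^{\iota 3\pi/M} + h_{AB}\sqrt{1-\alpha} + n_B$, and the true signal mean $c_1 = h_{CB}e^{\iota 3\pi/M}$ sits \emph{outside} the rotated constellation $\sqrt{\alpha}e^{\iota\pi/M}\mathcal{S}_C$ against which the decoder is testing. The pairwise comparison is nonetheless a binary decision between $s_1$ and $s_2$, so the probability of either outcome is trivially at most $1/2$; this supplies the uniform upper bound $P_3^c \leq 1/2$, which the proposition records as an equality.

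The delicate step is the handling of $P_3^c$. A direct evaluation is available---projecting $c_1$ onto the normal to the decision boundary gives the sharper expression $Q(\gamma\sin(2\pi/M)\sqrt{2/N_{B1}})$---but this refinement introduces an awkward $M$-dependence that would only clutter the later minimization over $\alpha$, without changing the high-SNR behaviour, since $P_{10}$ is already the asymptotically dominated reliability factor by Lemma~\ref{lemma_p11_p00_domination} and Lemma~\ref{P01_P10}. Using the worst-case value $1/2$ is therefore both technically valid as an upper bound and operationally inconsequential. Reassembling $P_{11}P_3$ with $P_{10}P_3^c$ through \eqref{eq:pair_wise_2} then completes the derivation.
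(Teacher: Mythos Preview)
Your decomposition via \eqref{eq:pair_wise_2} and your computation of $P_3$ are correct and match what the paper intends (the paper states this proposition without proof, so there is nothing to compare against at that level). Your observation that $P_3^c = 0.5$ is really an upper bound rather than an exact value, and that the paper adopts it for convenience in the downstream optimization, is also accurate.

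There is, however, one logical slip in your treatment of $P_3^c$. The sentence ``the pairwise comparison is nonetheless a binary decision between $s_1$ and $s_2$, so the probability of either outcome is trivially at most $1/2$'' is not a valid argument: a binary test with a mismatched signal mean can place probability arbitrarily close to $1$ on the wrong side. What actually justifies $P_3^c \le 1/2$ is precisely the geometric fact you compute in the next paragraph: the true mean $c_1 = h_{CB}e^{\iota 3\pi/M}$ lies on the $s_1$ side of the perpendicular bisector of $s_1,s_2$, so the projection onto the decision normal is nonnegative and hence $P_3^c = Q\bigl(\gamma\sin(2\pi/M)\sqrt{2/N_{B1}}\bigr) \le Q(0) = 1/2$. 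That computation is correct and suffices; just be aware that it is doing the real work, not the ``binary decision'' remark. With that correction, your argument is complete and in fact more detailed than anything the paper supplies.
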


Using the expressions from the above propositions in \eqref{high_snr_union_bound}, $\mbox{Pr}(error|h_{CB})$ is bounded by 
\begin{equation}
\label{SER_J}
P_{00}P_1 +\!P_{01}P_1^c +P_{11}P_2 + P_{10}P_2^c + P_{11}P_3 + P_{10}0.5.
\end{equation}
Since $P_{01}$ is negligible for all values of $\alpha$, we replace the term $P_{00}P_1 +\!P_{01}P_1^c$ by $P_1$ in the above expression. Furthermore, note that $P_{11}, P_{10}, P_{00}$ are independent of $h_{CB}$, whereas $P_{1}, P_{2}, P^{c}_{2}, P_{3}$ are functions of $h_{CB}$. In the following proposition, we present the average probability of error of the joint dominant decoder, henceforth denoted as $\overline{P}_{e,Joint} = \mathbb{E}_{|h_{CB}|^2}[\mbox{Pr}(error|h_{CB})]$. 

\begin{proposition}
The average probability of error of the joint dominant decoder is upper bounded as
\begin{eqnarray}
\label{SER_Javg}
\overline{P}_{e,Joint} \leq P_{1, avg} + P_{11}\mathbb{E}_{|h_{CB}|^2}[P_{2}] + P_{10}\mathbb{E}_{|h_{CB}|^2}[P_{2}^{c}] \nonumber \\
+ ~P_{11}P_{3, avg} + ~P_{10}0.5,
\end{eqnarray}
where 
\begin{eqnarray*}
P_{1,avg} & = & \left(\frac{N_{B0}P_{11}}{N_{B1}P_{00}}\right)^\frac{N_{B1}}{N_{B1}-N_{B0}} \frac{(N_{B0}-N_{B1})^2}{(N_{B0}-N_{B1})^2 + d^2N_{B1}},\label{p1avg}\\
P_{3,avg} & = & \frac{2N_{B1}}{4N_{B1}+\ell^2}.\label{p3avg}
\end{eqnarray*}
\end{proposition}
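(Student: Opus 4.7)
The plan is to pass from the conditional bound in \eqref{SER_J} to an unconditional bound by taking expectation over the Rayleigh-faded gain $|h_{CB}|^{2}$ and then evaluating in closed form the two averages that admit one. As a preliminary step I would invoke the replacement already suggested just below \eqref{SER_J}, namely that $P_{00}P_{1}+P_{01}P_{1}^{c}$ may be upper bounded by $P_{1}$ because $P_{00}\leq 1$ and, by Lemma \ref{lemma_p11_p00_domination}, $P_{01}$ is negligible in the high-SNR regime of interest. Since $P_{00},P_{01},P_{10},P_{11}$ depend only on $\alpha,\sigma_{AC}^{2},N_{C0}$ and not on $h_{CB}$, linearity of expectation then yields
\begin{equation*}
\overline{P}_{e,Joint}\leq \mathbb{E}[P_{1}]+P_{11}\mathbb{E}[P_{2}]+P_{10}\mathbb{E}[P_{2}^{c}]+P_{11}\mathbb{E}[P_{3}]+0.5\,P_{10},
\end{equation*}
so it suffices to identify $\mathbb{E}[P_{1}]$ with $P_{1,avg}$ and $\mathbb{E}[P_{3}]$ with $P_{3,avg}$; the $P_{2}$ and $P_{2}^{c}$ terms are retained as expectations since no elementary closed form is expected for them.

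For $P_{3,avg}$ the computation is routine. Applying the Chernoff-type bound $Q(x)\leq\tfrac{1}{2}e^{-x^{2}/2}$ to $P_{3}=Q(\gamma\ell/\sqrt{2N_{B1}})$ gives $P_{3}\leq\tfrac{1}{2}\exp\!\left(-\gamma^{2}\ell^{2}/(4N_{B1})\right)$, and since $\gamma^{2}=|h_{CB}|^{2}$ is exponentially distributed with unit mean (because $h_{CB}\sim\mathcal{CN}(0,1)$), the exponential moment generating function gives
\begin{equation*}
\mathbb{E}[P_{3}]\leq\tfrac{1}{2}\int_{0}^{\infty}e^{-u\ell^{2}/(4N_{B1})}e^{-u}\,du=\frac{2N_{B1}}{4N_{B1}+\ell^{2}},
\end{equation*}
which is exactly $P_{3,avg}$.

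The harder step is the evaluation of $\mathbb{E}[P_{1}]=\mathbb{E}_{\gamma}\!\left[Q_{1}(|A|/\sigma_{B0},\sqrt{\xi}/\sigma_{B0})\right]$, where both arguments of the first-order Marcum $Q$-function depend on $\gamma$: the first scales as $\gamma$, whereas the second is the square root of an affine function of $\gamma^{2}$ that carries the logarithmic offset $\ln(N_{B0}P_{11}/(N_{B1}P_{00}))$. My plan is to rewrite $|A|^{2}/\sigma_{B0}^{2}$ and $\xi/\sigma_{B0}^{2}$ in the canonical form $a\gamma^{2}$ and $b\gamma^{2}+c$, and then invoke the standard Rayleigh-averaging identity
\begin{equation*}
\int_{0}^{\infty}Q_{1}\!\left(\sqrt{au},\sqrt{bu+c}\right)e^{-u}\,du,
\end{equation*}
whose closed form factorizes into an exponential factor in $c$ (which will materialize as the power $(N_{B0}P_{11}/(N_{B1}P_{00}))^{N_{B1}/(N_{B1}-N_{B0})}$) times a rational factor in $a,b$ (which will produce $(N_{B0}-N_{B1})^{2}/((N_{B0}-N_{B1})^{2}+d^{2}N_{B1})$). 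The chief obstacle is the bookkeeping in this Marcum-$Q$ integration: because $N_{B0}-N_{B1}<0$, the absolute values and signs inside $A$, $B$ and $\xi$ must be tracked with care, and the logarithmic constant inside $\xi$ must be matched to the correct exponent so that the two factors of the given product emerge cleanly; once the identity is applied correctly, the claimed expression for $P_{1,avg}$ follows and substituting the two closed forms into the bound above completes the proof.
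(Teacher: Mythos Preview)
Your treatment of the preliminary step and of $P_{3,avg}$ matches the paper exactly: Chernoff-bound the Gaussian $Q$-function, then use the exponential MGF of $|h_{CB}|^{2}$.

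The $P_{1,avg}$ step, however, does not go through as you describe. You propose to evaluate
\[
\int_{0}^{\infty}Q_{1}\!\left(\sqrt{au},\sqrt{bu+c}\right)e^{-u}\,du
\]
in closed form via a ``standard Rayleigh-averaging identity,'' and claim the result factorizes into exactly the displayed power times the displayed rational function. There is no such identity in the required generality: when the second argument of $Q_{1}$ carries an additive constant under the square root, the exponential average does not simplify to the stated product. In fact the expression $P_{1,avg}$ in the proposition is not the exact value of $\mathbb{E}[P_{1}]$ at all, but an \emph{upper bound} obtained by two simplifications before averaging. The paper first applies the exponential bound $Q_{1}(a,b)\leq e^{-(b-a)^{2}/2}$ to $P_{1}$, then uses the high-SNR observation $|A|\ll\sqrt{\xi}$ to drop the cross term and obtain $P_{1}\leq e^{-\xi/N_{B0}}$. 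Only at that point is the remaining function a pure exponential in $\gamma^{2}$, and averaging over $|h_{CB}|^{2}\sim\mathrm{Exp}(1)$ yields $P_{1,avg}$ directly: the $\ln(\cdot)$ piece of $\xi$ produces the power $\bigl(N_{B0}P_{11}/(N_{B1}P_{00})\bigr)^{N_{B1}/(N_{B1}-N_{B0})}$, and the $\gamma^{2}d^{2}$ piece produces the rational factor $(N_{B0}-N_{B1})^{2}/\bigl((N_{B0}-N_{B1})^{2}+d^{2}N_{B1}\bigr)$.

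So the missing idea in your plan is that $P_{1,avg}$ arises from a Marcum-$Q$ \emph{bound} followed by a high-SNR approximation, not from an exact Rayleigh integration; the rest of your argument is fine.
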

\begin{proof}
Towards computing $\mathbb{E}_{|h_{CB}|^2}[P_{1}]$, we use the upper bound on the Marcum-Q function given by $P_{1} \leq e^{-\frac{1}{2}\left(\frac{\sqrt{\xi}}{\sigma_{B0}}-\frac{|A|}{\sigma_{B0}}\right)^2}$. Furthermore, we observe that $|A|<<\sqrt{\xi}$, and thus, simplifying the bound as $P_{1} \leq e^{-\frac{1}{2}\left(\frac{\sqrt{\xi}}{\sigma_{B0}}\right)^2}$. Finally, averaging it over $|h_{CB}|^{2}$, we get $\mathbb{E}_{|h_{CB}|^2}[P_{1}] < P_{1, avg}.$ Similarly, towards computing $\mathbb{E}_{|h_{CB}|^2}[P_{3}]$, we apply Chernoff-bound on the Q Function, and subsequently average it over $|h_{CB}|^2$ to get $\mathbb{E}_{|h_{CB}|^2}[P_{3}] < P_{3, avg}.$
\end{proof}

In the following result, we prove that $P_{1, avg}$ is not a dominant term of \eqref{SER_Javg} when the SNR is sufficiently large. 

\begin{theorem}\label{P1_P3}
Let $\varsigma = \mbox{cos}\frac{\pi}{M}$, where $M$ is the size of PSK constellation such that $(1-\varsigma^2)=\mu N_o$, where $\mu>>1$, then we have the inequality $P_{1,avg} < P_{11}P_{3,avg} + P_{10} 0.5$. 
\end{theorem}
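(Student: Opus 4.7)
The plan is to argue that under the high-SNR regime $\mu \gg 1$ (equivalently $N_o \ll 1-\varsigma^2$), the left side $P_{1,avg}$ decays like $O(N_o)$ while $P_{11}P_{3,avg}$ stays bounded below by a positive constant independent of $\mu$; the claimed inequality then holds for $\mu$ sufficiently large, even without needing the $P_{10}\cdot 0.5$ term on the right. First I would substitute $N_{B0}=N_o$, $N_{B1}=N_o+(1-\alpha)$, $N_{B1}-N_{B0}=1-\alpha$, and $\ell^{2}=4\alpha(1-\varsigma^2)=4\alpha\mu N_o$ so that both sides are expressed purely in $N_o$, $\alpha$, $\varsigma$, and the decoding-error probabilities at Charlie.

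For the upper bound on $P_{1,avg}$, I would first use Lemma~\ref{lemma_p11_p00_domination} combined with $P_{11}+P_{10}=P_{00}+P_{01}=1$ to conclude that $P_{11},P_{00}>1/2$ in the regime of interest, so that the ratio $P_{11}/P_{00}$ lies in a bounded interval, say $(1/2,2)$. Writing $x := N_o/(1-\alpha)$, the first factor of $P_{1,avg}$ rewrites as
\[
\left(\frac{x}{1+x}\cdot\frac{P_{11}}{P_{00}}\right)^{1+x},
\]
and the elementary limit $x^{x}\to 1$ as $x\to 0^{+}$ shows this factor is $\Theta(N_o)$. Since the remaining fraction $\frac{(1-\alpha)^{2}}{(1-\alpha)^{2}+d^{2}(N_o+1-\alpha)}$ is bounded above by $1$, one obtains $P_{1,avg}=O(N_o)\to 0$ as $N_o\to 0$.

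For the lower bound on $P_{11}P_{3,avg}$, I would exploit that $(1-\varsigma^{2})\le 1$, hence $\ell^{2}\le 4\alpha$, giving
\[
P_{3,avg}\;\ge\;\frac{2(1-\alpha)}{4(N_o+1-\alpha)+4\alpha}\;=\;\frac{1-\alpha}{2(1+N_o)},
\]
which is at least $(1-\alpha)/3$ for small $N_o$. Combined with $P_{11}>1/2$ from Lemma~\ref{lemma_p11_p00_domination}, this yields $P_{11}P_{3,avg}>(1-\alpha)/6$, a positive constant that does not vanish with $\mu$. Since the left side is $O(N_o)$ and the right side is bounded below by a fixed positive quantity, there exists a threshold $\mu^{*}$ such that for all $\mu>\mu^{*}$ we have $P_{1,avg}<P_{11}P_{3,avg}\le P_{11}P_{3,avg}+P_{10}\cdot 0.5$, establishing the claim.

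The main obstacle I expect is the asymptotic control of the first factor of $P_{1,avg}$: both its base and its exponent depend on $N_o$, so one must carefully combine the limit $x^{x}\to 1$ with the uniform Lemma~\ref{lemma_p11_p00_domination} bounds on $P_{11}/P_{00}$ to pin down the exact $\Theta(N_o)$ rate. Once that scaling is in hand, the comparison with the positive-constant lower bound on $P_{11}P_{3,avg}$ reduces to a one-line observation.
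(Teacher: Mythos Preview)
Your argument has a genuine gap: it treats $\alpha$ as a fixed constant while sending $N_o\to 0$, but the theorem must hold \emph{uniformly} over $\alpha\in(0,1)$, since it is subsequently used to drop $P_{1,avg}$ from the bound \eqref{SER_Javg} at every $\alpha$. The claim that the first factor of $P_{1,avg}$ is $\Theta(N_o)$ breaks down exactly when $1-\alpha$ is comparable to $N_o$: with $1-\alpha=\rho N_o$ your $x=N_o/(1-\alpha)=1/\rho$ is no longer small, and $\bigl(\tfrac{x}{1+x}\cdot\tfrac{P_{11}}{P_{00}}\bigr)^{1+x}$ is $\Theta(1)$ rather than $\Theta(N_o)$. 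At the same time your lower bound $P_{11}P_{3,avg}>(1-\alpha)/6=\rho N_o/6$ vanishes in this regime. Hence for $\alpha$ within roughly $O(\sqrt{N_o})$ of $1$ your upper and lower bounds fail to separate. (A related issue: Lemma~\ref{lemma_p11_p00_domination} gives $P_{11}>1/2$ only for $\alpha<1-N_o/\sigma_{AC}^2$, so your interval control on $P_{11}/P_{00}$ also lapses near $\alpha=1$.)

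The paper's proof is built around precisely this difficulty. It reparametrizes $1-\alpha=\rho N_o$ and proves the \emph{unweighted} inequality $P_{1,avg}<P_{3,avg}$ case by case for $\rho<1$, $\rho=1$, $1<\rho\le 3$, and $\rho>3$, in each range balancing the exponential factor $(1/(1+\rho))^{(1+\rho)/\rho}$ against the rational factor of $P_{1,avg}$ and the correspondingly scaled $P_{3,avg}$. Only after $P_{1,avg}<P_{3,avg}$ and $P_{1,avg}<1/2$ are established for all $\alpha$ does the $P_{10}\cdot 0.5$ term enter, via the convex-combination step $P_{11}P_{3,avg}+P_{10}\cdot 0.5 > P_{11}P_{1,avg}+P_{10}P_{1,avg}=P_{1,avg}$ (using $P_{11}+P_{10}=1$). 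So the term you dismissed as unnecessary is in fact what carries the inequality near $\alpha=1$, where $P_{10}$ is not small. Your outline can be repaired by aiming for $P_{1,avg}<P_{3,avg}$ instead of $P_{1,avg}<P_{11}P_{3,avg}$, but that still requires a $\rho$-case analysis to cover the regime $1-\alpha=O(N_o)$.
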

\begin{proof}
We first prove $P_{1,avg} < P_{3,avg}$. In other words, we need to prove
\begin{equation*}
\frac{2N_{B1}}{4N_{B1}+\ell^2} \geq \left(\frac{N_{B0}P_{11}}{N_{B1}P_{00}}\right)^\frac{N_{B1}}{N_{B1}-N_{B0}} \frac{(N_{B0}-N_{B1})^2}{(N_{B0}-N_{B1})^2 + d^2N_{B1}}.
\end{equation*}
The expression for $P_{1,avg}$ can be further upper bounded as
\begin{equation}
P_{1,avg} \leq \left(\frac{N_{B0}P_{11}}{N_{B1}P_{00}}\right)^\frac{N_{B1}}{N_{B1}-N_{B0}} \frac{1-\alpha}{2-2\sqrt{\alpha}\mbox{cos}\frac{\pi}{M}}\label{p1_s1}.
\end{equation}
Let $1-\alpha = \rho N_o$, then (\ref{p1_s1}) becomes
\begin{small}
\begin{eqnarray}
P_{1,avg} & < & \left(\frac{P_{11}}{(\rho+1)P_{00}}\right)^{\frac{1+\rho}{\rho}} \frac{\rho N_o}{2-2\sqrt{1-\rho N_o}\mbox{cos}\frac{\pi}{M}}\label{p1_s3},\\
 & < & \left(\frac{1}{\rho+1}\right)^{\frac{1+\rho}{\rho}} \frac{\rho N_o}{2-2\sqrt{1-\rho N_o}\mbox{cos}\frac{\pi,}{M}}\label{p1_s2},
\end{eqnarray}
\end{small}
\noindent where the second inequality is applicable because $P_{11}<P_{00}, \forall \alpha$ from Lemma \ref{P01_P10}. Also, the expression for $P_{3,avg}$ is rewritten as
\begin{equation}
\label{p3_s1}
P_{3,avg} = \frac{(\rho+1)N_o}{N_o(2(1+\rho)-2\rho(1-\varsigma^2))+2(1-\varsigma^2)},
\end{equation}
where $\varsigma = \mbox{cos}\frac{\pi}{M}$. In the rest of the proof, we prove that either \eqref{p1_s3} or \eqref{p1_s2} is less than \eqref{p3_s1} considering three cases: $\rho = 1, \rho < 1, \mbox{ and } \rho > 1$.

For $\rho=1$, (\ref{p1_s2}) reduces to $\frac{1}{8}\frac{N_o}{1-\varsigma}$, by substituting $\rho=1$ and also upper bounding $(1-\rho N_o)$ by $1$. Also, (\ref{p3_s1}) can be simplified as $\frac{N_o}{N_o(1 + \varsigma^2)+(1-\varsigma^2)}$. Further, since $(1-\varsigma^2)=\mu N_o$, where $\mu>>1$, we have $N_o(1 + \varsigma^2) < 1-\varsigma^2$, and therefore, (\ref{p3_s1}) can be lower bounded by $\frac{N_o}{2(1-\varsigma^2)}$. Thus, we can easily conclude that $\frac{1}{8}\frac{N_o}{1-\varsigma}<\frac{N_o}{2(1-\varsigma^2)}$, $\forall \ 0 < \varsigma < 1$. 

For, $\rho<1$, we have the bound $\left(\frac{1}{1+\rho}\right)^{\frac{1+\rho}{\rho}}<\frac{1}{e}$, and therefore, (\ref{p1_s2}) is upper bounded as $\frac{1}{5.42}\frac{\rho N_o}{1-\varsigma}$. Further, we have $\frac{1}{5.42}\frac{\rho N_o}{1-\varsigma} = \frac{\rho(1+\varsigma)}{5.42\mu}<\frac{2\rho}{5.42\mu}$. Also, in the first term in the denominator of (\ref{p3_s1}), we have the bound $N_o(2(1+\rho)-2\rho(1-\varsigma^2)) < N_o(2(1+\rho))< 4N_o$. Thus, (\ref{p3_s1}) is lower bounded by $\frac{(\rho+1)}{2(2+\mu)}$. We can immediately infer that $\frac{(\rho+1)}{2(2+\mu)}>\frac{2\rho}{5.42\mu}$ when $\mu >> 1$ and for $\rho<1$. This completes the case of $\rho < 1$.

For, $\rho>1$, we split the case into two parts, namely, (i) $1 < \rho \leq 3$, and (ii) $\rho > 3$. For $1 < \rho\leq3$, the first term of denominator of (\ref{p3_s1}) is upper bounded as $8N_o$, and therefore (\ref{p3_s1}) is lower bounded as $\frac{\rho +1}{2(4+\mu)}$. Furthermore, upper bounding  (\ref{p1_s2}) on similar lines gives us $\left(\frac{1}{1+\rho}\right)^{\frac{1+\rho}{\rho}}\frac{\rho(1+\varsigma)}{2\mu}<\left(\frac{1}{1+\rho}\right)^{\frac{1+\rho}{\rho}}\frac{\rho}{\mu}$. Form this discussion, we can conclude that
\begin{IEEEeqnarray}{rCl}
\left(\frac{1}{1+\rho}\right)^{\frac{1+\rho}{\rho}}\frac{\rho}{\mu}&<&\frac{\rho +1}{2(4+\mu)},
\end{IEEEeqnarray}
wherein the above inequality holds good because $\left(\frac{1}{1+\rho}\right)^{\frac{1+\rho}{\rho}}$ is a decreasing function of $\rho$ and is upper bounded by $\frac{1}{4}$.\\

For $\rho>3$, if the condition $N_o(2(1+\rho)-2\rho(1-\varsigma^2))<2(1-\varsigma^2)$ holds, then (\ref{p3_s1}) is lower bounded as $\frac{(\rho+1)N_o}{4(1-\varsigma^2)} = \frac{1+\rho}{4\mu}$. Furthermore, (\ref{p1_s2}) is upper bounded as $\left(\frac{1}{1+\rho}\right)^{\frac{1+\rho}{\rho}}\frac{\rho N_o}{2(1-\varsigma)}$, in which the following inequalities $\left(\frac{1}{1+\rho}\right)^{\frac{1+\rho}{\rho}}<\left(\frac{1}{4}\right)^{\frac{4}{3}}$ and $\frac{\rho N_o}{2(1-\varsigma)}=\frac{\rho N_o(1+\varsigma)}{2(1-\varsigma^2)}<\frac{\rho}{\mu}$ hold. It is now straightforward to prove that $\frac{0.157\rho}{\mu}<\frac{0.25(\rho+1)}{\mu}$, for $0<\varsigma<1$ and $\rho>3$. Additionally, for the case $\rho>3$, and $\rho\leq\mu$, if we have $N_o(2(1+\rho)-2\rho(1-\varsigma^2))>2(1-\varsigma^2)$, then (\ref{p3_s1}) is lower bounded as $\frac{(1+\rho)N_o}{2N_o[2(1+\rho)-2\rho(1-\varsigma^2)]}>\frac{(1+\rho)N_o}{4N_o(1+\rho)}=\frac{1}{4}$. Therefore, $\frac{0.157\rho}{\mu}<\frac{1}{4}$. Now for $\rho>3$ and also, $\rho>\mu$, i.e., for larger values of $\rho$, we have $\left(\frac{1}{1+\rho}\right)^{\frac{1+\rho}{\rho}}<\frac{1}{1+\rho}$, and therefore, (\ref{p1_s2}) is upper bounded as $\frac{1}{1+\rho}\frac{\rho}{\mu}<\frac{1}{\mu}$. We can clearly see that, $\frac{1}{\mu}<\frac{1}{4}$, because $\mu>>1$. This completes the case of $\rho > 3$.

Finally, since $P_{1, avg}$ is also less than $0.5$, the statement of the theorem is proved because $P_{11}P_{3,avg} + P_{10} 0.5$ is a convex combination. This completes the proof. 
\end{proof}

Using Theorem \ref{P1_P3}, we can further upper bound $\overline{P}_{e,Joint}$ as $\overline{P}_{e,Joint}  \leq 2 P_{dom},$ where 
\begin{equation}
\label{SER_J1}
P_{dom} = P_{11}\left(P_{3,avg} + \mathbb{E}_{|h_{CB}|^2}[P_{2}]\right) + P_{10}\left(\mathbb{E}_{|h_{CB}|^2}[P_{2}^{c}] + \frac{1}{2}\right),
\end{equation}
such that $P_{dom}$ represents the dominant error events of the overall probability of error. Note that each term of \eqref{SER_J1} is a function of $\alpha$. Therefore, we are interested in characterizing the range of values of $\alpha$ in which one of the terms in $P_{dom}$ is significant than the others. This way, we can arrive at an appropriate value of $\alpha$ that minimizes the dominant error component of the joint probability of error of the FFFD JD decoder. 

\subsection{Domination of Error Events as a Function of $\alpha$}

To characterize the behavior of $P_{dom}$ as a function of $\alpha$, it is important to evaluate $\mathbb{E}_{|h_{CB}|^2}[P_{2}]$ and $\mathbb{E}_{|h_{CB}|^2}[P_{2}^{c}]$ in closed form. However, since both $P_{2}$ and $P_{2}^{c}$ are related of Marcum-Q functions, it is well known that exact expressions of $\mathbb{E}_{|h_{CB}|^2}[P_{2}]$ and $\mathbb{E}_{|h_{CB}|^2}[P_{2}^{c}]$ cannot be derived. On the other hand, while tight lower and upper bounds are available for $P_{2}^c$ \cite{MQF}, we notice that bounds are loose for $P_{2}$ \cite{MQF}, and as a result, the subsequent upper bounds on $P_{dom}$ will also be loose. Therefore, in this work, we do not take the conventional approach of minimizing $P_{dom}$ (or its upper bound) over $\alpha \in (0, 1)$. 

Applying numerical integration techniques to compute $\mathbb{E}_{|h_{CB}|^2}[P_{2}]$ and $\mathbb{E}_{|h_{CB}|^2}[P_{2}^{c}]$, we observe that $P_{11}\left(P_{3,avg} + \mathbb{E}_{|h_{CB}|^2}[P_{2}]\right)$ is a decreasing function of $\alpha$, whereas $P_{10}\left(\mathbb{E}_{|h_{CB}|^2}[P_{2}^{c}] + \frac{1}{2}\right)$ is an increasing function of $\alpha$. With that insight, computing the point of intersection between $P_{11}\left(P_{3,avg} + \mathbb{E}_{|h_{CB}|^2}[P_{2}]\right)$ and $P_{10}\left(\mathbb{E}_{|h_{CB}|^2}[P_{2}^{c}] + \frac{1}{2}\right)$ would give us a value of $\alpha$, say $\alpha = \alpha^{\dagger}$ below which the term $P_{11}\left(P_{3,avg} + \mathbb{E}_{|h_{CB}|^2}[P_{2}]\right)$ dominates the term $P_{10}\left(\mathbb{E}_{|h_{CB}|^2}[P_{2}^{c}] + \frac{1}{2}\right)$. Since the dominant term experiences a dip at $\alpha = \alpha^{\dagger}$, we can use $\alpha^{\dagger}$ as the power-splitting factor of the SC-FFFD technique.

Since $\mathbb{E}_{|h_{CB}|^2}[P_{2}]$ and $\mathbb{E}_{|h_{CB}|^2}[P_{2}^{c}]$ cannot be derived in closed-form, we cannot analytically evaluate the point of intersection $\alpha^{\dagger}$ in closed form. To circumvent this problem, we compute an approximation on $\alpha^{\dagger}$ by computing the point of intersection between a lower bound on $P_{11}\left(P_{3,avg} + \mathbb{E}_{|h_{CB}|^2}[P_{2}]\right)$ and a lower bound $P_{10}\left(\mathbb{E}_{|h_{CB}|^2}[P_{2}^{c}] + \frac{1}{2}\right)$. Towards that direction, the following proposition provides a tight lower bound on $\mathbb{E}_{|h_{CB}|^2}[P_{2}^{c}]$.

\begin{proposition} 
The term $\mathbb{E}_{|h_{CB}|^2}[P_{2}^{c}]$ satisfies the inequality 
\begin{equation}
\label{eqp2c_lb}
\mathbb{E}_{|h_{CB}|^2}[P_{2}^{c}] > P_{2, avg}^{c} = \frac{4d^{2}N^{2}_{o}}{4d^{2}N^{2}_{o} + (N_{o} + (1-\alpha))(1 - \alpha)^{2}}.
\end{equation}
\end{proposition}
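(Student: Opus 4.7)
The plan is to lower bound $P_2^c = 1 - Q_1(|A|/\sigma_{B1}, \sqrt{\xi}/\sigma_{B1})$ pointwise in $\gamma = |h_{CB}|$ by a quantity whose expectation over $\gamma^2 \sim \mathrm{Exp}(1)$ admits the closed form stated on the right-hand side. The starting point is the geometric interpretation already implicit in the derivation of $P_2$ and $P_2^c$ earlier in the paper: because $N_{B0} \neq N_{B1}$, the MAP decision region for hypothesis $(0,1)$ versus $(1,1)$ is a disk $\mathcal{D}$ in the complex plane. Writing $z = r_{10} - h_{CB}e^{\iota\phi_1}$, which is $\mathcal{CN}(0, N_{B1})$ under the $(1,1)$-with-Charlie-mis-decoding event, one has $P_2^c = \Pr(z \in \mathcal{D}) = \Pr\bigl(|z - A|^2 \leq \xi\bigr)$, matching the complementary Marcum-Q expression.

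The next step is to apply a Chernoff bound to $\Pr(|z - A|^2 > \xi)$ using the moment generating function of a shifted complex Gaussian, namely $\mathbb{E}[e^{s|z-A|^2}] = (1-sN_{B1})^{-1}\exp\!\bigl(s|A|^2/(1-sN_{B1})\bigr)$ for $s < 1/N_{B1}$. Because $|A|^2$ is proportional to $\gamma^2$ and $\xi$ is affine in $\gamma^2$ (with slope $N_oN_{B1}d^2/(1-\alpha)^2$ and intercept $N_oN_{B1}L/(1-\alpha)$), the Chernoff upper bound on $Q_1(\cdot,\cdot)$ reduces, for any admissible $s$, to a $\gamma^2$-independent pre-factor times $e^{-\lambda(s)\gamma^2}$. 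Taking the expectation against the density $e^{-\gamma^2}$ of $\gamma^2$ and invoking $\mathbb{E}[e^{-\lambda\gamma^2}] = 1/(1+\lambda)$ then yields a lower bound on $\mathbb{E}_{|h_{CB}|^2}[P_2^c]$ of the form $1 - C(s)/(1+\lambda(s))$, which I would rearrange into the ratio $\mu/(\mu+\nu)$ required by the statement.

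The main obstacle is choosing $s$ so that the resulting lower bound collapses exactly to $4d^2N_o^2/\bigl(4d^2N_o^2 + (N_o+(1-\alpha))(1-\alpha)^2\bigr)$, rather than to a looser or more cumbersome expression. The factor $4$ in the numerator, together with the grouping $(N_o+(1-\alpha))(1-\alpha)^2 = N_{B1}(N_{B1}-N_{B0})^2$ in the denominator, strongly points to the choice $s = (N_{B1}-N_{B0})/(2N_{B1}^2) = (1-\alpha)/(2N_{B1}^2)$, which balances the MGF pre-factor $(1-sN_{B1})^{-1}$ against the exponent $s|A|^2/(1-sN_{B1})$ and leaves a decay rate in $\gamma^2$ equal to $\lambda = N_{B1}(1-\alpha)^2/(4d^2N_o^2)$. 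With this $\lambda$, the identity $1/(1+\lambda)$ reproduces the claimed $P_{2,\mathrm{avg}}^c$ verbatim, and the strict inequality is preserved because the residual constant-in-$\gamma^2$ term coming from $\xi_0 = N_oN_{B1}L/(1-\alpha)$ contributes a further decrease to the Chernoff upper bound on $Q_1$. Verifying this identification and handling the $\xi_0$ contribution cleanly is the only real computational step.
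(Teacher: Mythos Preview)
Your Chernoff/MGF route is different from what the paper actually does, and as written it does not close. Two concrete issues:

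\textbf{(i) The chosen $s$ does not produce the claimed decay rate.} With $s=(1-\alpha)/(2N_{B1}^2)$ one has $1-sN_{B1}=(2N_o+(1-\alpha))/(2N_{B1})$, and the coefficient of $\gamma^2$ in the exponent $s|A|^2/(1-sN_{B1})-s\xi$ works out to
\[
-\frac{d^2 N_o}{2N_{B1}\bigl(2N_o+(1-\alpha)\bigr)},
\]
not $-N_{B1}(1-\alpha)^2/(4d^2N_o^2)$. In fact the target decay rate needed to reproduce the stated $P_{2,\mathrm{avg}}^c$ after averaging is $c=4d^2N_o^2/\bigl((1-\alpha)^2N_{B1}\bigr)$, the \emph{reciprocal} of your $\lambda$; with the correct $c$ one gets $\mathbb{E}[1-e^{-c\gamma^2}]=c/(1+c)=P_{2,\mathrm{avg}}^c$, whereas your ``$1/(1+\lambda)=P_{2,\mathrm{avg}}^c$'' conflates the bound on $Q_1$ with the bound on $1-Q_1$.

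\textbf{(ii) The prefactor goes the wrong way.} The MGF contributes a factor $(1-sN_{B1})^{-1}>1$ in front of the exponential. You would need this to be absorbed by the $e^{-s\xi_0}$ term uniformly in the parameters, which you have not argued; without that, you only get $P_2^c\geq 1-(\text{something}>1)\cdot e^{-c\gamma^2}$, which after averaging does not yield the clean ratio in the statement.

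For comparison, the paper's argument is much shorter and avoids optimizing over $s$ entirely: it invokes the standard envelope bound $Q_1(a,b)\leq e^{-(b-a)^2/2}$, then uses the observation $\sqrt{\xi}>3|A|$ to replace $(\sqrt{\xi}-|A|)^2$ by $(2|A|)^2$, giving directly $P_2^c\geq 1-\exp\!\bigl(-4|A|^2/N_{B1}\bigr)=1-\exp\!\bigl(-4d^2N_o^2\gamma^2/((1-\alpha)^2N_{B1})\bigr)$. Averaging this exponential over $\gamma^2\sim\mathrm{Exp}(1)$ is a one-line computation that lands exactly on the stated $P_{2,\mathrm{avg}}^c$. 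If you want to salvage your approach, you would need to find an $s$ (and control the prefactor) that reproduces this same exponent $4d^2N_o^2/((1-\alpha)^2N_{B1})$; the $s$ you proposed does not.
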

\begin{proof}
We apply the lower bound on $P_{2, avg}^{c}$ by using an upper bound on the Marcum-Q function which results in $P_{2}^c \geq 1-e^{-\frac{1}{2}\left(\frac{\sqrt{\xi}}{\sigma_{B1}}-\frac{|A|}{\sigma_{B1}}\right)^2}$. Subsequently, we notice that $\sqrt{\xi}>3|A|$, and therefore simplify the bound to $P_{2}^c\geq 1-e^{-2\left(\frac{|A|}{\sigma_{B1}}\right)^2}$ Finally, we average this bound over $|h_{CB}|^2$ to obtain \eqref{eqp2c_lb}.
\end{proof}

In addition to the bound in \eqref{eqp2c_lb}, we also observe that $\mathbb{E}_{|h_{CB}|^2}[P_{2}] > 0$ trivially. Using these two lower bounds, we are interested in computing the range of values of $\alpha$ in which $P_{11}P_{3,avg}$ dominates the term $P_{10}\left(P_{2, avg}^{c} + \frac{1}{2}\right)$. Note that both these terms are in closed form, and as a result, the point of intersection between the two can be computed analytically. To assist computing the dominant term between the two, we show in Lemma \ref{L_p3_dec} that $P_{3,avg}$ is a decreasing function of $\alpha$, and also show in Lemma \ref{P2c_inc} that $\left(P_{2, avg}^c + \frac{1}{2}\right)$ is an increasing function of $\alpha$. Furthermore, given that $P_{11}$ and $P_{10}$ are decreasing and increasing functions of $\alpha$, respectively, we show that computing the value of $\alpha$ at which $P_{11}P_{3,avg}$ intersects with $P_{10}\left(P_{2, avg}^c + \frac{1}{2}\right)$ gives the range of values of $\alpha$ for which $P_{11}P_{3,avg}$ dominates the term $P_{10}\left(P_{2, avg}^{c} + \frac{1}{2}\right)$. 

\begin{lemma}\label{L_p3_dec}
The term $P_{3,avg}$ decreases as $\alpha$ increases in the interval $(0, 1)$.
\end{lemma}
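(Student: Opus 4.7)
The plan is to show that the derivative $\frac{d P_{3,avg}}{d\alpha}$ is strictly negative on $(0,1)$ by direct computation, since $P_{3,avg}$ is an explicit closed-form rational function of $\alpha$.

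First I would substitute $N_{B1} = N_o + (1-\alpha)$ and $\ell^2 = 4\alpha \sin^2(\pi/M)$ into $P_{3,avg} = \frac{2N_{B1}}{4N_{B1}+\ell^2}$ to get
\begin{equation*}
P_{3,avg} \;=\; \frac{N_o + 1 - \alpha}{2(N_o+1) - 2\alpha \cos^2(\pi/M)},
\end{equation*}
using $1 - \sin^2(\pi/M) = \cos^2(\pi/M)$ in the denominator. This rewrites $P_{3,avg}$ as a ratio of two affine functions of $\alpha$, both of which are positive throughout $(0,1)$ (the denominator is positive since $\cos^2(\pi/M) < 1$ and $N_o > 0$).

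Next I would apply the quotient rule. Since the numerator has derivative $-1$ and the denominator has derivative $-2\cos^2(\pi/M)$, the numerator of $\frac{d P_{3,avg}}{d\alpha}$ collapses, after cancellation of the $\alpha$-dependent terms, to
\begin{equation*}
-2(N_o+1)\bigl(1-\cos^2(\pi/M)\bigr) \;=\; -2(N_o+1)\sin^2(\pi/M),
\end{equation*}
which is a strictly negative constant (independent of $\alpha$) for any $M \geq 2$. Dividing by the square of the positive denominator keeps the sign negative, so $P_{3,avg}$ is strictly decreasing on $(0,1)$, as claimed.

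The main obstacle here is essentially just bookkeeping in the simplification; no analytic subtlety arises because both numerator and denominator are linear in $\alpha$, and the key cancellation $-v + 2u\cos^2(\pi/M)$ reduces neatly to a single negative term. The only thing to watch is that one must use $\ell^2 = 4\alpha\sin^2(\pi/M)$ (not $\ell = 2\sqrt{\alpha}\sin(\pi/M)$) when differentiating, so that the expression remains rational in $\alpha$ and the quotient rule applies cleanly without an extra factor of $\alpha^{-1/2}$.
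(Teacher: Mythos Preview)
Your proposal is correct and follows essentially the same approach as the paper: both proofs differentiate $P_{3,avg}$ directly with respect to $\alpha$ and show the numerator simplifies to a negative constant multiple of $(N_o+1)\sin^2(\pi/M)$, while the denominator is a positive square. Your early use of $1-\sin^2(\pi/M)=\cos^2(\pi/M)$ to write $P_{3,avg}$ as a ratio of two affine functions makes the cancellation slightly cleaner, but the argument is otherwise identical.
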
 
\begin{proof}
The expression for $P_{3,avg}$ is given by
\begin{equation*}
P_{3,avg} = \frac{2N_{B1}}{4N_{B1}+\ell^2} = \frac{2(N_o + 1 -\alpha)}{4(N_o+1-\alpha) + 4\alpha sin^{2}\left(\frac{\pi}{M}\right)}.
\end{equation*}
Differentiating the above w.r.t. $\alpha$, we get
$\frac{dP_{3,avg}}{d\alpha} =\frac{1}{2}\frac{\left[-N_o - 1 + \alpha\left(1 - sin^2\left(\frac{\pi}{M}\right)\right)\right]+ \left[(N_o + 1 -\alpha)\left(1-sin^2\left(\frac{\pi}{M}\right)\right)\right]}{\left[N_o + 1-\alpha \left(1-sin^2\left(\frac{\pi}{M}\right)\right)\right]^2}$.
Closely observing the above equation revels that the denominator is always a positive quantity, whereas the numerator can be simplified to obtain $-(N_o + 1)sin^2\left(\frac{\pi}{M}\right)$. Thus, $\frac{dP_{3,avg}}{d\alpha}$ is always negative. Therefore, $P_{3,avg}$ is a decreasing function w.r.t. $\alpha$.
\end{proof}

\begin{lemma}\label{P2c_inc}
Let $\varsigma = \mbox{cos}(\frac{\pi}{M})$, where $M$ is the size of the PSK constellation such that $(1 - \varsigma^{2}) = \mu  N_{o}$, where $\mu >> 1$. When $\alpha \in (\varsigma^{2}, 1)$, $P_{2,avg}^c$ is an increasing function of $\alpha$, and when $\alpha \in (0, \varsigma^{2})$, we have $P_{2,avg}^c + 0.5 \approx 0.5$.
\end{lemma}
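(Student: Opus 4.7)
The plan is to rewrite the target quantity as $P_{2,avg}^c = 1/(1 + h(\alpha))$ where
$$h(\alpha) = \frac{(N_o + 1 - \alpha)(1-\alpha)^2}{4 N_o^2 \, d^2(\alpha)},$$
so that the two claims of the lemma become (i) $h$ is strictly decreasing on $(\varsigma^2,1)$ and (ii) $h(\alpha)\gg 1$ uniformly on $(0,\varsigma^2)$. The algebraic identity I will lean on throughout is
$$d^2(\alpha) = 1 + \alpha - 2\sqrt{\alpha}\,\varsigma = (\sqrt{\alpha}-\varsigma)^2 + (1-\varsigma^2) = (\sqrt{\alpha}-\varsigma)^2 + \mu N_o,$$
which cleanly exposes why the threshold is $\alpha=\varsigma^2$: that is precisely where $\sqrt{\alpha}$ crosses $\varsigma$ and the first summand of $d^2$ switches from decreasing to increasing.

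For the monotonicity claim on $(\varsigma^2,1)$, I argue separately about the numerator and the denominator of $h$. The numerator $(N_o + 1 - \alpha)(1-\alpha)^2$ is a product of two positive, strictly decreasing factors, hence strictly decreasing. On $(\varsigma^2,1)$ one has $\sqrt{\alpha}>\varsigma$, so $(\sqrt{\alpha}-\varsigma)^2$ is strictly increasing, and therefore $d^2(\alpha)$ is strictly increasing. A positive strictly decreasing function divided by a positive strictly increasing function is strictly decreasing, so $h$ is strictly decreasing and $P_{2,avg}^c$ is strictly increasing on $(\varsigma^2,1)$.

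For the claim on $(0,\varsigma^2)$, the strategy is to produce a uniform lower bound $h(\alpha)=\Omega(\mu^2)$. Two estimates suffice. First, on this interval $(1-\alpha)\geq 1-\varsigma^2 = \mu N_o$, hence
$$(N_o + 1 - \alpha)(1-\alpha)^2 \;\geq\; (1+\mu)N_o \cdot (\mu N_o)^2 \;=\; \mu^2(1+\mu)\,N_o^3.$$
Second, since $(\sqrt{\alpha}-\varsigma)^2\leq \varsigma^2$ on $(0,\varsigma^2)$ and $\mu N_o = 1-\varsigma^2$, we obtain $d^2(\alpha)\leq \varsigma^2 + \mu N_o = 1$. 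Combining gives $h(\alpha)\geq \mu(\mu+1)(1-\varsigma^2)/4$, which tends to infinity under the standing assumption $\mu\gg 1$ at fixed $M$. Thus $P_{2,avg}^c = O(1/\mu^2)$ uniformly on $(0,\varsigma^2)$, and $P_{2,avg}^c + 0.5 \approx 0.5$ follows.

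The main obstacle I anticipate is that near the right endpoint $\alpha\to\varsigma^2$ both the numerator and the denominator of $h$ individually collapse (numerator of order $\mu^3 N_o^3$, denominator of order $\mu N_o^3$ since $d^2\to\mu N_o$). What keeps the ratio $\Theta(\mu^2)$ across the whole interval, rather than degenerating at that endpoint, is the identity $\varsigma^2 + \mu N_o = 1$, which makes the uniform denominator bound $d^2\leq 1$ clean and parameter-free; without exploiting this cancellation, a naive estimate of $1/d^2$ near $\alpha=\varsigma^2$ would blow up and obscure the argument.
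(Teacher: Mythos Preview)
Your proof is correct and follows essentially the same route as the paper: for Part~1 both arguments rewrite $P_{2,avg}^c$ so that it depends on the ratio $(N_o+1-\alpha)(1-\alpha)^2/d^2$, then note that the numerator is decreasing while $d^2$ is increasing on $(\varsigma^2,1)$; for Part~2 both bound $d^2$ from above and use $(1-\alpha)\geq \mu N_o$ to force $P_{2,avg}^c$ to be negligible. Your completed-square identity $d^2=(\sqrt{\alpha}-\varsigma)^2+\mu N_o$ is a cleaner way to expose both the threshold $\alpha=\varsigma^2$ and the monotonicity of $d^2$ than the paper's treatment, but the argument is structurally the same.
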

\begin{proof}
When $\alpha \in (\varsigma^{2}, 1)$, it is straightforward to observe that $d^{2}$ is an increasing function of $\alpha$. Also, the term $P_{2,avg}^c$ can be rewritten as
\begin{equation*}
\frac{4N^{2}_{o}}{4N^{2}_{o} + \frac{(N_{o} + (1-\alpha))(1 - \alpha)^{2}}{d^{2}}}.
\end{equation*}
Since $d^{2}$ is an increasing function in $\alpha \in (\varsigma^{2}, 1)$, and the term $(N_{o} + (1-\alpha))(1 - \alpha)^{2}$ is a decreasing function of $\alpha \in (0, 1)$, the fraction $\frac{(N_{o} + (1-\alpha))(1 - \alpha)^{2}}{d^{2}}$ is a decreasing function of $\alpha$ in the range $\alpha \in (\varsigma^{2}, 1)$. This completes the proof that $P_{2,avg}^c$ is an increasing function of $\alpha$ when $\alpha \in (\varsigma^{2}, 1)$. For the second part, the term $P_{2,avg}^c$ is upper bounded as
\begin{equation*}
\frac{4d^{2}N^{2}_{o}}{(N_{o} + (1-\alpha))(1 - \alpha)^{2} + 4\mu N^{3}_{o}},
\end{equation*}
by using the lower bound $d^{2} \geq 1 - \varsigma^{2} = \mu N_{o}$ in the second term of the denominator. Furthermore, we can upper bound $d^{2}$ in the numerator by $1-\alpha$ by using the lower bound $\varsigma > \sqrt{\alpha}$ in the range $\alpha \in (0, \varsigma^{2})$. Let us also denote $(1 - \alpha) = \rho N_{o}$, where $\rho > 0$. With that the upper bound can now be written as 
\begin{equation*}
\frac{4 \rho N^{3}_{o}}{N^{3}_{o}(\rho^{3} + \rho^{2}) + 4\mu N^{3}_{o}} = \frac{4\rho}{(\rho^{3} + \rho^2 + 4\mu)}, 
\end{equation*}
where $\rho \geq \mu$ since $\alpha \leq \varsigma^{2}$ and $1 - \varsigma^{2} = \mu N_{o}$. Finally, since $\rho \geq \mu >> 1$, the above term is a negligible number, and therefore, $P_{2,avg}^c + 0.5 \approx 0.5$. This completes the proof for the second part. 
\end{proof}

From Lemma \ref{L_p11}, Lemma \ref{L_p3_dec}, and Lemma \ref{P2c_inc}, we deduce that $P_{11}P_{3,avg}$ decreases with $\alpha$, whereas $P_{10}(P_{2,avg}^c+0.5)$ increases with $\alpha$. With that the following theorem shows that $P_{11}P_{3,avg}$ and $P_{10}(P_{2,avg}^c+0.5)$ intersect only once in the interval $(0, 1)$

\begin{theorem}\label{P2P3}
When SNR is sufficiently large, the terms $P_{11}P_{3,avg}$ and $P_{10}(P_{2,avg}^c+0.5)$ intersect only at one value of $\alpha$, say $\alpha^{*}$, in the interval $(0, 1)$.
\end{theorem}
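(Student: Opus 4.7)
The plan is to exploit opposite strict monotonicities of the two curves together with their values at the endpoints of $(0,1)$ and apply the intermediate value theorem.

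\textbf{Step 1 (Strict monotonicity).} First I would argue that $\alpha \mapsto P_{11}P_{3,avg}$ is continuous and strictly decreasing on $(0,1)$ because it is the product of two positive, strictly decreasing functions: Lemma \ref{L_p11} gives the decrease of $P_{11}$ and Lemma \ref{L_p3_dec} gives the decrease of $P_{3,avg}$. For the other curve, since $P_{10}=1-P_{11}$, Lemma \ref{L_p11} implies that $P_{10}$ is strictly increasing on $(0,1)$. Lemma \ref{P2c_inc} then splits $(0,1)$ into $(0,\varsigma^{2})$, where $P_{2,avg}^{c}+0.5 \approx 0.5$, and $(\varsigma^{2},1)$, where $P_{2,avg}^{c}+0.5$ is strictly increasing. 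On each sub-interval the product $P_{10}(P_{2,avg}^{c}+0.5)$ is therefore strictly increasing, and continuity at $\alpha=\varsigma^{2}$ glues the two halves into a strictly increasing function on $(0,1)$.

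\textbf{Step 2 (Endpoint evaluation).} Next I would compute the asymptotic values at the two limits under the high-SNR assumption $\mu \gg 1$. At $\alpha\to 0^{+}$: direct substitution gives $P_{3,avg}(0)=\tfrac{1}{2}$, and $P_{11}(0)\to 1$ at high SNR, so the left-hand side tends to $\tfrac{1}{2}$; meanwhile $P_{10}(0)\to 0$, so the right-hand side tends to $0$. At $\alpha\to 1^{-}$: a limit analysis of the Charlie-decoder expression (where $N_{C1}\to N_{C0}$ and $\beta\to N_{C0}$) yields $P_{11}(1)\to e^{-1}$ and hence $P_{10}(1)\to 1-e^{-1}$; the closed form $P_{3,avg}(1)=\frac{N_{o}}{2(N_{o}+\sin^{2}(\pi/M))}$ is negligible at high SNR, while substituting $\alpha=1$ into \eqref{eqp2c_lb} gives $P_{2,avg}^{c}(1)=1$. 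Thus the left-hand side tends to $0$ and the right-hand side tends to $\tfrac{3}{2}(1-e^{-1})>0$.

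\textbf{Step 3 (Conclusion and main obstacle).} With $P_{11}P_{3,avg}-P_{10}(P_{2,avg}^{c}+0.5)$ positive as $\alpha\to 0^{+}$ and negative as $\alpha\to 1^{-}$, continuity plus the intermediate value theorem yields at least one intersection $\alpha^{*}\in(0,1)$, and strict opposite monotonicity forces uniqueness. The step I expect to be the main obstacle is making rigorous the behaviour of $P_{2,avg}^{c}+0.5$ on $(0,\varsigma^{2})$, since Lemma \ref{P2c_inc} only supplies the approximation "$\approx 0.5$" there rather than a genuine monotonicity statement. A clean way to close this gap would be to differentiate the closed-form lower bound in \eqref{eqp2c_lb} directly and verify that $P_{2,avg}^{c}$ is in fact strictly monotone on all of $(0,1)$ at high SNR, which upgrades the gluing argument to an unconditional monotonicity proof and removes any doubt about extraneous crossings.
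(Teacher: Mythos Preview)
Your proposal is correct and follows essentially the same route as the paper: establish that $P_{11}P_{3,avg}$ is decreasing and $P_{10}(P_{2,avg}^{c}+0.5)$ is increasing via Lemmas \ref{L_p11}, \ref{L_p3_dec}, \ref{P2c_inc}, compare the two curves at the endpoints $\alpha\to 0^{+}$ and $\alpha\to 1^{-}$ (the paper obtains the same values $\tfrac{1}{2}$, $0$, $N_{o}e^{-1}$, $\tfrac{3}{2}(1-e^{-1})$), and conclude by applying the intermediate value theorem to the strictly decreasing difference. Your explicit flagging of the gap on $(0,\varsigma^{2})$ is more careful than the paper, which simply asserts the monotonicity of $f_{2}$ from Lemma \ref{P2c_inc} without further comment.
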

\begin{proof}
Let $f_1(\alpha)=P_{11}P_{3,avg}$ and $P_{10}(P_{2,avg}^c+0.5)$. Evaluating the extreme values of $f_1(\alpha)$ and $f_2(\alpha)$, we get 
\begin{eqnarray*}
f_1(0) & = & \left.P_{11}P_{3,avg}\right|_{\alpha=0}=\frac{1}{2},\\
f_2(0) & = & \left.P_{10}(P_{2,avg}^c+0.5)\right|_{\alpha=0} < \frac{1}{2},
\end{eqnarray*}
where the second inequality applies since $P_{10} |_{\alpha = 0} << 1$ and $P_{2,avg}^c+0.5 \approx 0.5$ for $\alpha = 0$ (from Lemma \ref{P2c_inc}). Similarly,
\begin{eqnarray*}
f_1(1) & = & \left.P_{11}P_{3,avg}\right|_{\alpha=1}=N_oe^{-1},\\
f_2(1) & = & \left.P_{10}(P_{2,avg}^c+0.5)\right|_{\alpha=1}\approx \frac{3}{2}(1-e^{-1}).
\end{eqnarray*}
Finally, we define $f(\alpha) \triangleq f_{1}(\alpha) - f_{2}(\alpha)$. Since $f_1(0) > f_2(0)$ and $f_1(1) < f_2(1)$, we have $f(0) > 0$ and $f(1) < 0$. In addition, since $f(\alpha)$ is a decreasing function of $\alpha$, it implies that $f(\alpha)$ has a unique root. Therefore, $f(\alpha^{*}) = 0$ for some $\alpha^{*} \in (0, 1)$. This completes the proof. 
\end{proof}

With $\alpha^{*}$ being the point of intersection between $P_{11}P_{3,avg}$ and $P_{10}(P_{2,avg}^c+0.5)$, we propose to use the value of $\alpha^{*}$ as the power-splitting factor between Alice and Charlie to implement the SC-FFFD relaying scheme. In practice, we can use the well known Newton-Raphson algorithm (NR) \cite{NR} to compute the root of $f(\alpha) = P_{11}P_{3,avg} - P_{10}(P_{2,avg}^c+0.5)$ as a function of the $M$-PSK constellation, noise variance $N_{o}$, and $\sigma^{2}_{AC}$. 


\section{Simulation Results}
\label{sec:sims}

In this section, we present simulation results to showcase the effectiveness of the proposed SC-FFFD technique to mitigate the jamming attack by a FD adversary. Throughout this section, we use the system model in Section \ref{sec:system_model} wherein the channels are distributed as $h_{AB} \sim \mathcal{CN}(0, 1)$, $h_{CB} \sim \mathcal{CN}(0, 1)$, and $h_{AC} \sim \mathcal{CN}(0, 4)$. We specifically choose $\sigma^{2}_{AC} = 4$ to showcase the benefits of the SC-FFFD technique when the channel between Alice and Charlie is more reliable than that between Alice (or Charlie) and Bob. We also use $\mbox{SNR } = \frac{1}{N_{o}}$ throughout this section. First, to present the variation of the error performance of the SC-FFFD technique with the power-splitting factor $\alpha \in (0, 1)$, we plot the average probability of error of various joint decoders in Fig. \ref{cumla} as a function $\alpha$ at SNR = 35 dB, and with $4$-, and $8$-PSK at Charlie. We use Monte-Carlo simulations to plot the average probability of error of the decoding metrics in \eqref{eq:joint_MAP_Bob} (the joint MAP decoder), \eqref{eq:joint_MAP_Bob_max} (the joint MAX decoder), and \eqref{eq:joint_MAP_Bob_dominant} (the joint dominant decoder). However, to plot the union bound in \eqref{SER_Javg}, we have used a combination of analytical expressions and numerical integration techniques. The plots in Fig. \ref{cumla} show that the curves decrease as a function of $\alpha$ upto a certain point, and then shoots up as $\alpha$ approaches $1$. This behavior of the curves is very intuitive as $\alpha = 0$ signifies that Charlie is not sending any message. Therefore, since Bob jointly decodes the information symbols of Alice and Charlie, he would have to guess Charlie's symbol, thereby resulting in degraded error performance. Similarly, when $\alpha=1$, Bob would have to guess Alice's symbol, and that explains the steep rise in the error. Between these extreme values, as $\alpha$ increases, the performance gradually improves because Charlie injects more power for its symbols. We also plot the union bound in \eqref{SER_Javg} to compare it with the average probability of error of the joint decoders. It can be observed that the dip in $\alpha$ for all the three decoders are very close to each other.

\begin{figure}[t]
\begin{center}
\includegraphics[scale=0.48]{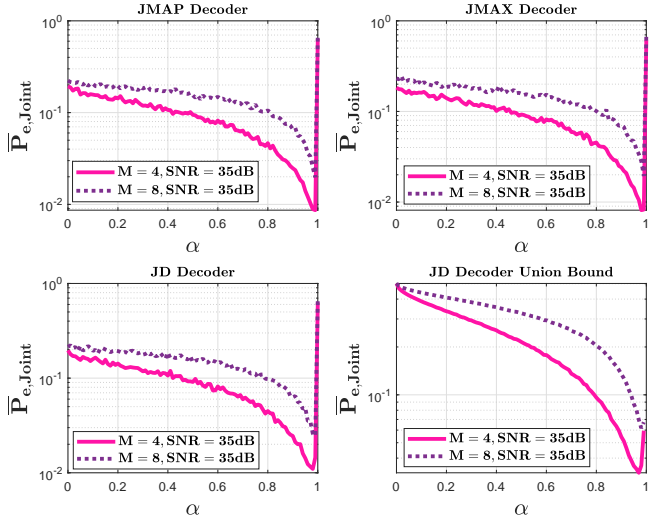}
\vspace{-0.8cm}
\caption{\label{cumla}Average probability of error of various joint decoders of the SC-FFFD technique as a function of $\alpha \in (0, 1)$ at SNR = 35 dB. The plots show that the minimum probability of error is achieved at a value of $\alpha$ close to $\alpha = 1$. Similar behavior is also observed with the bound in \eqref{SER_Javg}.}
\end{center}
\end{figure}

\begin{figure}[t]
\centerline{\includegraphics[scale=0.4]{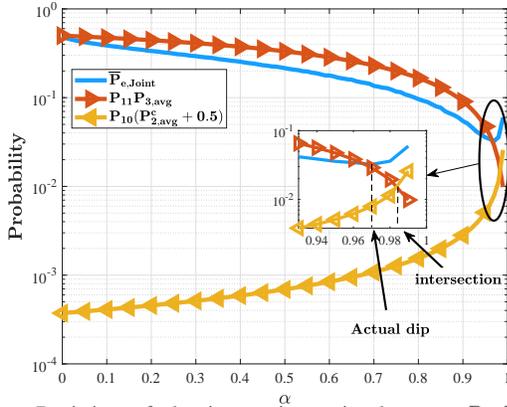}}
\vspace{-0.5cm}
\caption{\label{ub}Depiction of the intersection point between $P_{11}P_{3, avg}$ and $P_{10}(P_{2, avg}^{c} + 0.5)$ as a function of $\alpha \in (0, 1)$ at SNR = 35 dB and $4$-PSK at Charlie. The point of intersection is approximately close to the minima of the bound in \eqref{SER_Javg}.}
\end{figure}
\begin{figure}[t]
\centerline{\includegraphics[scale=0.43]{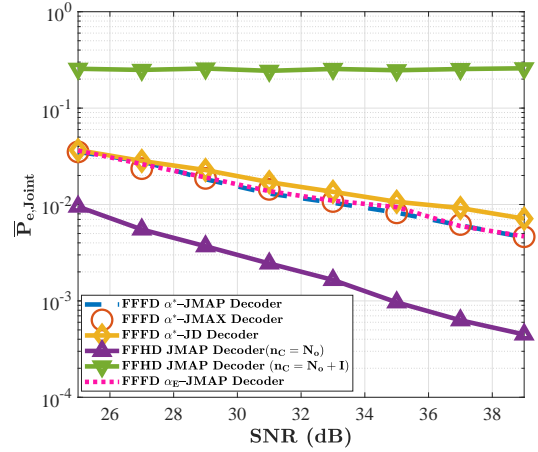}}
\vspace{-0.5cm}
\caption{\label{alp_cmp}Average probability of error of several joint decoders against various SNR values. The proposed FFFD joint dominant decoder is able to drive down the probability of error with increasing SNR while providing low-complexity analytical solutions to derive the power-splitting ratio.}
\end{figure}
\begin{figure}[t]
\centerline{\includegraphics[scale=0.43]{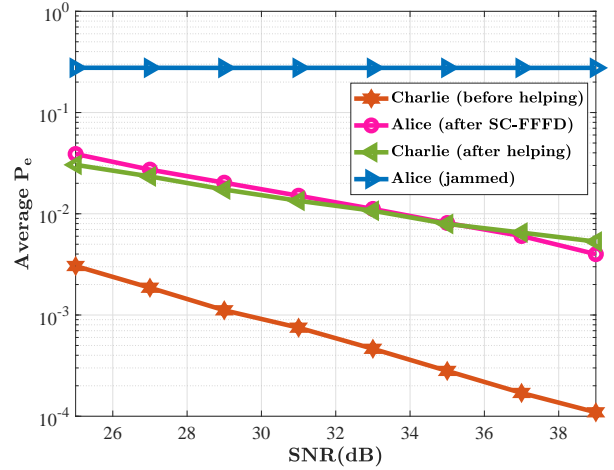}}
\vspace{-0.2cm}
\caption{\label{tf}Symbol-error-probability of information symbols of Alice and Charlie with and without using the SC-FFFD technique against different SNR values. It is clear from the plots that Charlie bails out Alice from the jamming attack at the cost of degradation in its error performance.}
\end{figure}

As one of the main contributions of this work, we provide an analytical approach to compute the value of $\alpha$ at which the average probability of error of the joint dominant decoder experiences a dip over $\alpha \in (0, 1)$. As explained in Section \ref{sec:subopt_decoder}, we propose to solve the intersection point between $P_{11}P_{3,avg}$ and $P_{10}(P_{2,avg}^c+0.5)$, denoted by $\alpha^*$, using the well-known Newton-Raphson (NR) algorithm. To depict the closeness between this intersection point and the minima of the union bound, we plot both of them in Fig. \ref{ub} as a function of $\alpha$ at SNR = 35 dB, and with $4$-PSK at Charlie.

When using $\alpha^*$ as the power-splitting factor between Alice and Charlie, in Fig. \ref{alp_cmp}, we plot the error performance of FFFD $\alpha^*$-JMAP decoder, FFFD $\alpha^*$-JMAX decoder, and FFFD $\alpha^*$-JD decoder as a function of SNR. In this context, we have prefixed $\alpha^{*}$ with JMAP and JMAX variants of the decoder to highlight that for each SNR value, the corresponding value of $\alpha^{*}$ is obtained from the NR algorithm. The plots show that using FFFD $\alpha^*$-JMAP decoder provides the best error performance among the three decoders. Furthermore, we also present the error performance of the FFFD $\alpha_E$-JMAP decoder, wherein the value of $\alpha_E \in (0, 1)$ minimizes the average probability of error of the JMAP decoder. We have computed $\alpha_{E}$ through exhaustive search over the interval $(0, 1)$ in steps of 0.001. The plots also show that the FFFD $\alpha_E$-JMAP decoder provides error performance very close to that of the $\alpha^*$-JMAP decoder. However, unlike the FFFD $\alpha^{*}$-JMAP decoder, the best value of $\alpha_{E}$ can only be computed using exhaustive search through simulations. As a result, applying the FFFD $\alpha_E$-JMAP decoder in practice is prohibitively complex. As a competitive baseline for the SC-FFFD technique, we have also considered an alternate cooperative relaying technique, wherein Alice continues to transmit her OOK symbols on the frequency band $f_{AB}$. Meanwhile, Charlie, which works in the half-duplex mode, listens to Alice's symbol by tuning to the frequency band $f_{AB}$, decodes it, and then instantaneously rotates its chosen PSK symbol by either $\frac{\pi}{N}$ or $0$ radians, depending on whether the decoded bit is $1$ or $0$, respectively. Finally, the modified PSK symbol is transmitted to Bob on the frequency band $f_{CB}$. Note that this scheme does not involve any power-splitting factor since Alice continues to communicate on the frequency band $f_{AB}$. Referring to this scheme as the Fast-Forward Half-Duplex (FFHD) technique, we also plot the average probability of error of the joint MAP decoder (denoted by FFHD JMAP decoder) in Fig. \ref{alp_cmp} under two scenarios: (i) when the location of Dave is such that the jamming energy on the frequency band $f_{AB}$ does not reach Charlie's receiver, and (ii) when the location of Dave is such that the jamming energy on the frequency band $f_{AB}$ reaches Charlie's receiver as significant interference (denoted by $\mathbf{I}$). The plots show that the FFHD scheme in the former scenario outperforms the proposed SC-FFFD relaying scheme, whereas the FFHD scheme in the latter scenario is not a favorable choice.  

Finally, we discuss the trade-off offered by the SC-FFFD scheme in improving the error performance of Alice's communication at the cost of degrading the error performance of Charlie's communication. In Fig. \ref{tf}, we plot the average symbol error probability in decoding Alice's and Charlie's information symbols before and after executing the SC-FFFD technique. To generate these plots, we use $4$-PSK at Charlie and OOK at Alice both before and after the SC-FFFD technique. Based on the plots in Fig. \ref{tf}, it is intuitive that after using $\alpha^*$ as the power-splitting factor of the SC-FFFD scheme, Alice's performance improves drastically as its information symbols are encoded in the form of rotation of PSK constellation, as well as the noise variance of the effective noise. However, it is observed that Charlie's performance deteriorates because Bob has to now make a decision between $8$ PSK symbols to decode Charlie's information symbols. Overall, we highlight that the proposed cooperative relaying strategy serves the purpose of forcing the attacker to continue executing the DOS attack on $f_{AB}$, while making sure that none of the other nodes in the network experience DOS attacks. 

\section{Conclusion}
\label{sec:conc}

In this paper, we have presented a novel cooperative mitigation strategy, referred to as the SC-FFFD scheme to facilitate communication of low-latency packets in the presence of a full-duplex adversary. We have observed that although the helper node takes a hit in its error performance, the victim node can reliably communicate its packets to the destination. Moreover, the two nodes cooperatively inject power on the jammed frequency so as to keep the adversary engaged on the jammed frequency band. As one of the main contributions of this work, we have analyzed the error performance of jointly decoding the information symbols of the victim and the helper node when they employ the OOK and PSK modulation, respectively. Our analysis has shown that an appropriate value of the power-splitting factor can be analytically computed by observing the dominant error events of the average probability of error of the joint decoder. We strongly believe that the proposed solutions are effective in scenarios wherein (i) the number of frequency bands to hop is limited compared to the number of users in the network, and (ii) there exists users equipped with full-duplex radios to assist the victim node.

\bibliographystyle{IEEEtran}
\bibliography{IEEEabrv, Reference}

\end{document}